\newif\ifarxiv
\title{Flip Distance Between Triangulations of a Planar Point Set is APX-Hard\thanks{Accepted Manuscript. The final version is available via \protect\url{http://dx.doi.org/10.1016/j.comgeo.2014.01.001}}}
\author{Alexander Pilz%
\thanks{%
\ifarxiv
Institute for Software Technology, Graz University of Technology, Austria.
\texttt{apilz@ist.tugraz.at.}
\else
\fi
Recipient of a DOC-fellowship of the Austrian Academy of Sciences at the Institute for Software Technology, Graz University of Technology, Austria. Part of this work has been done while the author was visiting the Work Group Theoretical Computer Science at Freie Universit\"at Berlin, Germany.}}
\newcommand{\myqed}{\qed}
\date{December 30, 2013}
\renewcommand{\myqed}{}
\newtheorem{theorem}{Theorem}
\newtheorem{lemma}[theorem]{Lemma}
\newtheorem{corollary}[theorem]{Corollary}
\newtheorem{proposition}[theorem]{Proposition}
\newtheorem{definition}{Definition}
\newtheorem{problem}{Problem}
\begin{document}

\maketitle

\begin{abstract}
In this work we consider triangulations of point sets in the Euclidean plane, i.e., maximal straight-line crossing-free graphs on a finite set of points.
Given a triangulation of a point set, an edge flip is the operation of removing one edge and adding another one, such that the resulting graph is again a triangulation.
Flips are a major way of locally transforming triangular meshes.
We show that, given a point set $S$ in the Euclidean plane and two triangulations $T_1$ and $T_2$ of $S$,
it is an APX-hard problem to minimize the number of edge flips to transform $T_1$ to $T_2$.
\end{abstract}

\section{Introduction}
Given a finite set $S$ of $n$ points in the Euclidean plane, a triangulation~$T$ of $S$ is a maximal straight-line crossing-free graph on $S$.
An \emph{edge flip} is the operation of removing an edge $e$ of~$T$ and adding a different edge $f$ such that the resulting graph~$\tilde T$ is again a triangulation of $S$.
This requires the two empty triangles incident to~$e$ to form a convex quadrilateral, which is the same as the one formed by the triangles incident to~$f$ in $\tilde T$.
The flip operation defines the graph~$\mathcal{G}$ of triangulations of~$S$, also called the \emph{flip graph} of~$S$.
For a given set $S$, the vertex set of~$\mathcal{G}$ is the set of all triangulations of~$S$.
Two vertices in $\mathcal{G}$ are adjacent if the corresponding triangulations can be transformed into each other by a single edge flip.
Lawson~\cite{lawson_connected} showed that~$\mathcal{G}$ is connected with diameter~$O(n^2)$ for any~$S$.
Hurtado, Noy, and Urrutia~\cite{hurtado_noy_urrutia} proved that this bound is tight.

Bose and Hurtado~\cite{survey} give an extensive survey on the flip operation.
Flips in triangulations are used for enumeration and as a local operation to generate meshes of good quality according to a predefined criterion.
For example, Lawson~\cite{lawson_delaunay} showed that one can always obtain the Delaunay triangulation after $O(n^2)$ locally improving flips.
The Delaunay triangulation optimizes several criteria.
Also, heuristic methods for improving other properties of triangular meshes may apply local optimization using flips in combination with techniques like simulated annealing.
See~\cite{bern_eppstein,hjelle} for information on the topic of mesh optimization.
Another reason for the continuing interest in flips in triangulations is the bijection between binary trees and triangulations of convex point sets.
There, a flip corresponds to a rotation in the binary tree.
Properties of the flip graph for convex point sets were studied in the landmark paper of Sleator, Tarjan, and Thurston~\cite{sleator}.
They show that, for $n > 12$, the flip distance between two triangulations is at most $2n-10$ and that, for sufficiently large~$n$, this bound is tight.
In a recent preprint, Pournin~\cite{pournin} shows a general lower bound construction for point sets in convex position, implying that the bound $2n-10$ is tight for all $n > 12$.

Interestingly, the flip distance problem is still open for point sets in convex position (or equivalently, convex polygons), regardless of the intensive investigation of that structure within the last 25 years.
The problem was apparently first considered by Culik and Wood~\cite{tree_similarity} in 1982.
Efforts were made in solving special cases and approximating the flip distance in polynomial time.
The results by Sleator et al.~\cite{sleator} lead to an algorithm to obtain an approximation of the flip distance within a factor of~$2$.
Li and Zhang~\cite{convex_approx_diagonals} give an algorithm that approximates the flip distance within a factor depending on the maximal vertex degree~$\Delta$ in source and target triangulation, obtaining a performance ratio bound of $2 - 2/(4(\Delta-3)(\Delta+4)+1)$.
Cleary and St.~John~\cite{fpt_convex_flips} show that the problem is fixed-parameter tractable in the flip distance.
Bose et al.~\cite{edge_labelled_triangulations} most recently considered edge-labeled triangulations, i.e., triangulations in which each edge has a distinct label, and, after a flip, the new edge gets the label of the removed edge.
For the flip distance problem, not only the edges but also their labels are given for the target triangulation.
They show that, in this setting, the flip distance can be $\Theta(n \log n)$ in the worst case, and gave an~$O(\log n)$-factor approximation algorithm for computing the flip distance between two edge-labeled triangulations.

For general point sets, Hanke, Ottmann, and Schuierer~\cite{edge_flipping_distance} show that the length of a shortest path between two triangulations in $\mathcal{G}$ (i.e., the \emph{flip distance}) can be bounded from above by the number of crossings between the edges of the two triangulations.
Eppstein~\cite{eppstein} gives a polynomial-time algorithm for computing a lower bound; note that the point sets for which Eppstein's result is tight must not contain empty convex 5-gons.
This property requires that more than two points are placed on a common line if the set has 10 or more points (see, e.g.,~\cite{empty5gon}).
Throughout this paper, we make the common assumption that $S$ is in general position, i.e., that no three points are collinear.

Despite these results, the complexity of determining the flip distance between two triangulations has been unknown.
Our main result is that the problem is APX-hard, which sheds light on a ``fundamental open issue''~\cite{survey} in the study of flip graphs.
Finding a polynomial-time algorithm to determine the flip distance has been addressed as an open problem by Hanke et al.~\cite{edge_flipping_distance} already in 1996, and, most recently, in a monograph by Devadoss and O'Rourke~\cite[p.~71]{devadoss}.
APX-hardness of the problem implies that no polynomial-time approximation scheme (PTAS) exists (i.e., there is no polynomial-time algorithm that approximates the flip distance by a ratio of at most $1+\epsilon$ for every constant $\epsilon > 0$), unless $\textsc{P}=\textsc{NP}$.
Most recently, NP-completeness of the problem has simultaneously and independently been shown by Lubiw and Pathak~\cite{lubiw}.
However, their reduction is from the \textsc{Planar Cubic Vertex Cover} problem, for which a PTAS exists~\cite{planar_ptas, baker_ptas} (see also~\cite[p.~369]{apx_book}), and the reduction can therefore not be adapted directly to show APX-hardness.
For triangulations of simple polygons, Aichholzer, Mulzer, and Pilz~\cite{poly_hard} recently showed that the corresponding problem is NP-complete.
\ifarxiv
A previous preprint version of this paper ({\tt arXiv:1206.3179v1}) only showed NP-completeness of the corresponding decision problem.
\else
A preprint of this paper has been made available on arXiv~\cite{arxiv_version}, with the initial version only showing NP-completeness of the corresponding decision problem.
\fi

Clearly, the flip distance problem is an NP optimization problem.
Our reduction is from the well-known \textsc{Minimum Vertex Cover} problem.
In the next section, we show certain properties of triangulations of a class of point sets called double chains, which will be subsets in our construction.
In Section~\ref{sec_reduction}, we present the gadgets used in our reduction and analyze the construction.
In that section, we only present a rough overview on how the points of the set are placed, a more detailed description of how to calculate their coordinates is given in the appendix.

\section{Double-Chain Constructions}
\label{sec_double_chain}
A main ingredient of our reduction will be gadgets consisting of subsets that, being considered on their own, would require a number of flips that is quadratic in their size.

\subsection{A Single Double Chain}

We use definitions similar to~\cite{poly_hard}.
See \figurename~\ref{fig_dc}.
A \emph{double chain} $D$ is a point set of $2n$ points, $n$ on the \emph{upper chain} and~$n$ on the \emph{lower chain}.
Let these points be $\langle u_1, \dots, u_n \rangle$ and $\langle l_1, \dots, l_n \rangle$, respectively, ordered from left to right.
Any point on one chain sees every point of~$D$ on the convex hull boundary of the other chain (i.e., the interior of the straight line segment between these two points does not intersect the convex hulls of the two chains), and any quadrilateral formed by three points of one chain and one point of the other chain is non-convex.
Hurtado, Noy, and Urrutia~\cite{hurtado_noy_urrutia} show that the flip graph of the double chain has quadratic diameter.
Let $P_D$ be the polygon $\langle l_1, \dots, l_n, u_n, \dots, u_1 \rangle$.
The edges $u_i u_{i+1}$ and $l_i l_{i+1}$ for $1 \leq i < n$ have to be part of every triangulation of $D$ since there does not exist a straight-line segment between two points of $D$ that crosses any of them (such edges are called \emph{unavoidable}).
Therefore, we only need to consider the triangulation inside $P_D$ for the following result.

\begin{figure}
\centering
\includegraphics{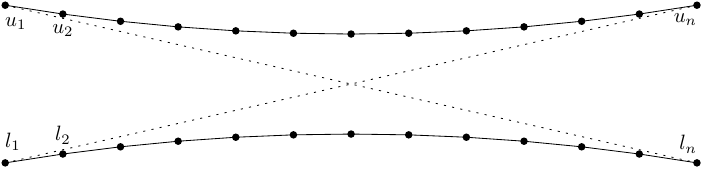}
\caption{A double chain.
The points are divided in an upper and lower chain, each chain being in convex position in a way that every point of the lower chain ``sees'' every vertex of the convex hull of the upper chain, and vice-versa.}
\label{fig_dc}
\end{figure}

\begin{theorem}[Hurtado, Noy, Urrutia]\label{thm_dc}
Consider any triangulation $T_1$ of $D$ where $u_1$ is adjacent to each of $l_1,\dots,l_n$, and any other triangulation~$T_2$, where $l_1$ is adjacent to $u_1,\dots,u_n$.
The flip distance between $T_1$ and $T_2$ is at least~\mbox{$(n-1)^2$}.
\end{theorem}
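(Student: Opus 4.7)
The plan is a potential-function argument. I will exhibit a function $\Phi$ on triangulations of $D$ satisfying $\Phi(T_1)-\Phi(T_2)\ge (n-1)^2$ and such that every single edge flip changes $\Phi$ by at most $1$; the flip distance is then at least $(n-1)^2$.

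The starting observation is a crossing count. Two inter-chain segments $u_a l_b$ and $u_c l_d$ cross precisely when $(a-c)(b-d)<0$, so each of the $n-1$ edges $u_1 l_j$ with $j\ge 2$ in $T_1$ crosses each of the $n-1$ edges $l_1 u_i$ with $i\ge 2$ in $T_2$, giving at least $(n-1)^2$ crossings between $T_1$ and $T_2$. The direct crossing count is not yet a valid potential, however, because a single flip may introduce a new diagonal that crosses many $T_2$-edges simultaneously; the challenge is to refine this into a local quantity that is insensitive to such jumps.

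My proposal is to define $\Phi$ as a sum of local weights over the triangles of $T$, namely $\Phi(T):=\sum_{\tau\in T} w(\tau)$, where $w(\tau)$ depends on the chain indices of $\tau$'s vertices (e.g., contributions built from products such as $(a-1)(n-b)$ for the two chain indices appearing in $\tau$). A flip replaces two adjacent triangles of $T$ with the two triangles given by the other diagonal of their common convex quadrilateral $Q$. The design condition for $w$ is that, for every such $Q$ whose four vertices lie in $D$, the two triangulations of $Q$ have total weights differing by at most $1$. Given such a $w$, together with the matching computation---using the $u_1$-star of $T_1$ and the $l_1$-star of $T_2$---that $\Phi(T_1)-\Phi(T_2)\ge (n-1)^2$, the lower bound follows at once.

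The principal obstacle will be exhibiting a weight function $w$ for which both the per-flip bound and the quadratic gap can be verified. The double-chain geometry is essential here: because each chain is in convex position and is reflex with respect to the other, the convex quadrilaterals that can appear in a flip are combinatorially constrained, and the unavoidability of the chain edges $u_i u_{i+1}$ and $l_i l_{i+1}$ further restricts the types of flips that are possible. This rigidity is precisely what allows a weight to scale up to $\Omega(n^2)$ on the two star configurations while changing by only $1$ at each flip. No such potential can exist for a convex polygon, where Sleator--Tarjan--Thurston already give a linear flip-graph diameter, so the proof must genuinely use the non-convex structure of the double chain.
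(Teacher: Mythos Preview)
Your plan is sound but, as you yourself say, the ``principal obstacle'' is left undone: you never exhibit the weight $w$, and without it there is no proof. The good news is that such a $w$ exists and is easy to write down once you see the combinatorics of the double chain.

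Inside $P_D$ every triangle has either two upper vertices (type~$1$) or two lower vertices (type~$0$); the chain edges $u_iu_{i+1}$, $l_jl_{j+1}$ are unavoidable, so every flip inside $P_D$ takes place in a convex quadrilateral $u_iu_{i+1}l_{j+1}l_j$ and swaps the pair $\{u_iu_{i+1}l_j,\,u_{i+1}l_jl_{j+1}\}$ for $\{u_il_jl_{j+1},\,u_iu_{i+1}l_{j+1}\}$. Setting
\[
w(u_iu_{i+1}l_j)=n-j,\qquad w(\text{every other triangle})=0,
\]
one checks immediately that each such flip changes $\Phi=\sum_\tau w(\tau)$ by exactly~$1$, while $\Phi(T_2)-\Phi(T_1)=(n-1)^2$ (the upper triangles of $T_2$ all have $l$-index~$1$, those of $T_1$ all have $l$-index~$n$). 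That completes your argument.

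This is exactly the paper's proof in disguise. Hurtado, Noy, and Urrutia label each triangle in $P_D$ by $0$ or $1$ as above and read off the labels along a horizontal separating line~$\ell$, obtaining a binary string with $n-1$ zeros and $n-1$ ones. A flip is then precisely an adjacent transposition of a $0$ and a $1$; since $T_1$ corresponds to $0^{n-1}1^{n-1}$ and $T_2$ to $1^{n-1}0^{n-1}$, at least $(n-1)^2$ transpositions are needed. Your $\Phi$ is nothing other than the inversion count of this string, and the weight $n-j$ on an upper triangle is the number of $0$'s to its right. So the two arguments coincide; the sequence encoding just makes the verification of the per-flip bound and the gap automatic, bypassing the search for $w$ that you flagged as the obstacle. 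Your opening paragraph on crossing counts is a detour and can be dropped.
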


\begin{figure}
\centering
\includegraphics{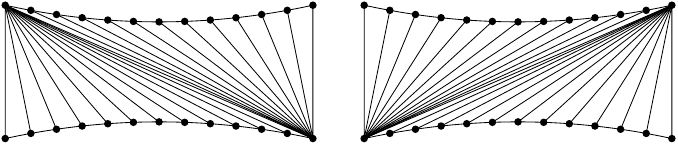}
\caption{Two (partial) triangulations of the double chain with a flip distance of at least~\mbox{$(n-1)^2$}.}
\label{fig_dc_triangulations}
\end{figure}

See \figurename~\ref{fig_dc_triangulations} for the relevant parts of the two triangulations.
In their proof, Hurtado et al.~\cite{hurtado_noy_urrutia} label the triangles inside~$P_D$ that have two points on the upper chain with \texttt{1} and the ones with two points on the lower chain with \texttt{0}.
Consider a horizontal line~$\ell$ that separates the two chains.
The triangles crossed by~$\ell$ define, from left to right, a sequence $\sigma$ of $(n-1)$ elements labeled \texttt{0} and $(n-1)$ elements labeled \texttt{1}, see~\figurename~\ref{fig_dc_01}.
Note that there are no triangles of a third type stabbed by~$\ell$.
Further note that we do not care about the triangulation of the convex hull of either chain; the lower bound on the flip distance stems from the part stabbed by~$\ell$.
It is easy to see that only an edge adjacent to two differently labeled triangles can be flipped in the stabbed part.
This corresponds to exchanging an adjacent pair of \texttt{0} and \texttt{1}.
Flipping the first triangulation to the second one corresponds to transforming the sequence $\sigma_1 = \langle (\texttt{0})^{n-1} (\texttt{1})^{n-1} \rangle$ to $\sigma_2 = \langle (\texttt{1})^{n-1} (\texttt{0})^{n-1} \rangle$, which leads to the desired bound.
We call these two triangulations (shown in \figurename~\ref{fig_dc_triangulations}) the \emph{extreme triangulations of $D$}.

\begin{figure}
\centering
\includegraphics{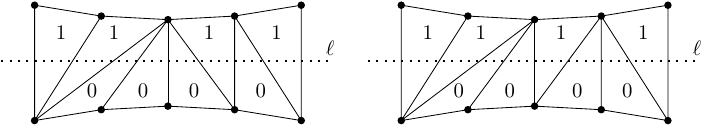}
\caption{An illustration of the labeling argument for the lower bound.
By the flip, the sequence changes from $\langle 11000101 \rangle$ to $\langle 11001001 \rangle$.}
\label{fig_dc_01}
\end{figure}

Our next step will be to gain more insight into the way the flip graph is altered by the addition of points.
For the following definition refer to \figurename~\ref{fig_dc_wedges}.

\begin{figure}
\centering
\includegraphics{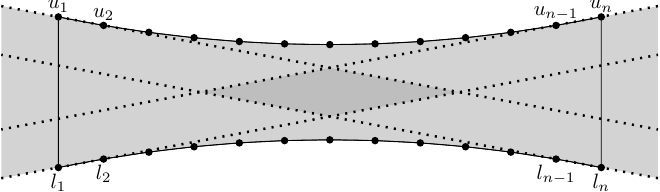}
\caption{The polygon~$P_D$ (bounded by solid lines) and the hourglass~$H_D$ (gray) of a double chain~$D$. The diamond-shaped flip-kernel can be stretched arbitrarily by flattening the bend of the chains.}
\label{fig_dc_wedges}
\end{figure}

\begin{definition}\label{def_dc}
Let $D$ be a double chain of $2n$ points, and consider the convex hulls of the upper and the lower chain.
Let $H_D$ be the continuous set of points such that for any point $p \in H_D$ there exist some $i,j$, $2 \leq i,j \leq n-1,$ with the triangle $p u_i l_j$ being interior-disjoint with the convex hulls of the upper and lower chain.
We call~$H_D$ the \emph{hourglass} of the double chain.
The \emph{flip-kernel of a double chain}~$D$ is the continuous set of points such that, for all $i,j$, $1 \leq i,j \leq n,$ and every point $p$ in the flip-kernel, the segments $p u_i$ and $p l_j$ are both interior-disjoint with the convex hulls of the upper and lower chain.%
\footnote{Note that the flip-kernel of $D$ may not be completely inside the polygon $P_D$ (but no point in the flip-kernel is outside the hourglass of~$D$).
This is in contrast to the common use of the term ``kernel'' in visibility problems for polygons.}
\end{definition}

Observe that the flip-kernel is the intersection of the open half-planes below $u_1 u_2$ and $u_{n-1} u_n$, as well as above $l_1 l_2$ and $l_{n-1} l_n$.
The hourglass is an unbounded region defined by the edges of $P_D$ and the rays defined by the first and the last vertex pair of each chain.

Let us add a point~$v$ inside the flip-kernel of $D$.
From any triangulation of the resulting set~$D \cup \{v\}$, we can flip the edges between the chains such that they are incident to $v$.
Reaching this canonical triangulation only requires a linear number of flips.
This fact is well-known folklore, see, e.g.,~\cite{problemas} for a printed description.
Consider the case where $v$ is placed outside $P_D$ but inside the flip-kernel of~$D$ (observe that the flip-kernel can be stretched by flattening the bend of the chains).
Add edges from $v$ to $u_1$ and $l_1$ to again have a triangulation, as shown in \figurename~\ref{fig_dc_steiner}.
Then, for flipping all possible edges to be incident to $v$, we need at most $2n-2$ flips.

In the remainder of this section, we will prove the following result, which shows that the quadratic lower bound holds if no point inside the hourglass is used to shorten the flip sequence.

\begin{figure}
\centering
\includegraphics{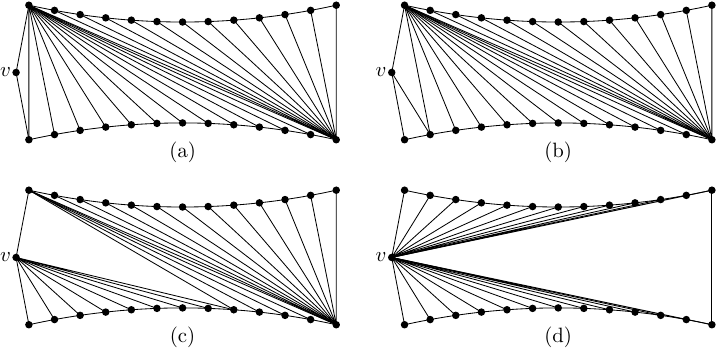}
\caption{An extra point~$v$ in the flip-kernel of~$D$ allows flipping one triangulation of~$P_D$~(a) to the other in $4n-4$ flips.
Note that an edge common to source and target triangulation is temporarily flipped~(b).}
\label{fig_dc_steiner}
\end{figure}

\begin{proposition}\label{prop_no_outer}
Let $D$ be a double chain of~$2n$~points and let $S \subset \mathbb{R}^2 \setminus P_D$ be a finite point set.
Let $T_1$ and $T_2$ be two triangulations of $S \cup D$ such that~$P_D$ is triangulated with one extreme triangulation of~$D$ in~$T_1$ and with the other extreme triangulation of~$D$ in~$T_2$.
Further, let $\sigma$ be a flip sequence from $T_1$ to $T_2$.
Assume that, throughout $\sigma$, no edge incident to a point of $S \cap H_D$ intersects the interior of~$P_D$.
Then $|\sigma| \geq (n-1)^2$.
\end{proposition}

In order to prove the proposition, we consider a mapping~$L$ from the set of triangulations of $S \cup D$ in $\sigma$ to the set of triangulations of the polygon~$P_D$.
When flipping an edge in a triangulation $T$ of $\sigma$, at most one edge is flipped in the corresponding triangulation~$L(T)$ of $P_D$.
Observe throughout the description that, informally, the mapping corresponds to continuously introducing the edges of the chains along the arrows drawn in \figurename~\ref{fig_dc_local}, while continuously sliding the edges of $T$ accordingly.

Consider any triangulation $T$ of $S \cup D$ in $\sigma$.
If all edges of $P_D$ are present, $L(T)$ equals the triangulation of $P_D$ in $T$ (note that this is also the case for the triangulations $T_1$ and $T_2$ of Theorem~\ref{thm_dc}).
Otherwise, consider the following construction (see \figurename~\ref{fig_dc_local} for an example).
For any edge $e$ of $T$ that intersects the hourglass of~$D$ and does not have any endpoint in the interior of the hourglass, we draw an edge $e'$ of $L(T)$ in the following way.
If one of the endpoints of $e$ is on a vertex of~$P_D$, then also one endpoint of~$e'$ is on that vertex.
If~$e$ passes through an edge $u_i u_{i+1}$ or~$l_j l_{j+1}$, then the corresponding upper or lower endpoint of~$e'$ is set to $u_{i+1}$ or~$l_{j+1}$, respectively.
If~$e$ passes through one of the rays defining the hourglass, then the corresponding endpoint of~$e'$ is mapped to the endpoint of the chain defining the ray;
for example, if $e'$ passes through the ray through~$u_1$ (starting at $u_2$) but not through the edge $u_1 u_2$, then the upper endpoint of~$e'$ is placed at $u_1$, such that~$e'$ is contained in $P_D$.
If an edge of~$T$ does not intersect the hourglass of~$D$ or has an endpoint in the interior of the hourglass, it is ignored by the mapping.

Let  $T' = L(T)$ be the graph induced by the new edges, and let the edges of~$T$ that pass through the hourglass but do not have an endpoint in $D$ be called \emph{wide} edges.
We call the construction $T'$ the \emph{local triangulation} of~$D$ when $T$ is clear from the context.
The following lemmata show that $T'$ actually is a triangulation of~$P_D$.

\begin{figure}
\centering
\includegraphics{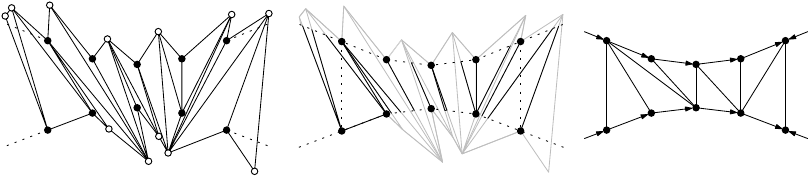}
\caption{Mapping a triangulation to a local triangulation of a double chain.
To the left, all triangles intersecting the hourglass of~$D$ are shown, the points of~$S$ are white.
Visually, one can think of ``cutting'' the edges at the boundary of the hourglass (middle) and moving (and merging) the endpoints to the next point~(right).}
\label{fig_dc_local}
\end{figure}

\begin{lemma}\label{lem_no_wide}
For every wide edge $e \in T$ that is mapped to $e' \in T'$, there is a different edge $\tilde e \in T$ that is also mapped to $e'$ and that has an endpoint $p \in D$.
\end{lemma}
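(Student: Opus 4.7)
The plan is as follows. First, I would observe that any wide edge $e$ passing through $W_1$ must, by virtue of having both endpoints in $S$ (and therefore outside $W \cup P_D$) and being forbidden to cross the unavoidable chain edges, enter and exit $W_1$ through the two rays of its boundary that extend beyond $u_1$ and $l_1$. By the mapping rule, this forces $e' = u_1 l_1$; the case of $W_n$ is symmetric, producing $e' = u_n l_n$. Hence it suffices to exhibit a single edge $\tilde e \in T$ with an endpoint in $D$ that maps to $u_1 l_1$, and this will witness the lemma for every wide edge in $W_1$ simultaneously.

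To produce such a $\tilde e$, I would single out an extremal wide edge: let $e^* = p^* q^*$ be the wide edge of $T$ in $W_1$ whose crossing with the ray through $u_1$ is closest to $u_1$ (breaking ties by the crossing with the ray through $l_1$, if needed). Let $\Delta$ be the triangle of $T$ incident to $e^*$ on the side of the line through $e^*$ that contains $u_1$ and $l_1$, and let $v$ denote its third vertex. The core claim is that $v \in \{u_1, l_1\}$. Once this is known, assuming $v = u_1$, the edge $\tilde e := q^* u_1$ runs from $q^*$ (which lies below the supporting line of $l_1 l_2$ and, as an endpoint of $e^*$, west of $l_1$) to $u_1$; it therefore crosses $W_1$, exits through the ray beyond $l_1$, and so maps to $u_1 l_1$. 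The case $v = l_1$ is symmetric, yielding $\tilde e := p^* l_1$.

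The step requiring the most care is the core claim. For $v \in S$: since $v$ lies outside $W \cup P_D$ yet on the inner side of $e^*$, it must be above the supporting line of $u_1 u_2$ or below that of $l_1 l_2$. In the first case the segment $q^* v$ crosses both supporting lines; since it cannot cross any chain edge, it must cross the two westward rays, making $q^* v$ itself a wide edge. Moreover, because $v$ sits on the apex side of $e^*$, the segment $q^* v$ is rotated toward the apex compared with $q^* p^* = e^*$, so its crossing with the ray through $u_1$ lies strictly closer to $u_1$ than that of $e^*$; this contradicts the extremal choice of $e^*$. The symmetric argument using $p^* v$ handles $v$ below the lower supporting line. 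For $v \in D \setminus \{u_1, l_1\}$, that is $v = u_i$ or $v = l_j$ with $i$ or $j \geq 2$, a direct geometric check shows that $\Delta$ must then contain either $u_1$ or $l_1$ in its interior, again contradicting $\Delta \in T$. These exclusions together yield $v \in \{u_1, l_1\}$ and complete the argument.
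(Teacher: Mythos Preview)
Your argument has a genuine gap at its very first step. You claim that a wide edge passing through $W_1$ must enter and exit through the two rays beyond $u_1$ and $l_1$, ``being forbidden to cross the unavoidable chain edges.'' But the chain edges $u_i u_{i+1}$ and $l_j l_{j+1}$ are unavoidable only in triangulations of $D$ \emph{alone}; in a triangulation $T$ of $D \cup S$ they need not be present. A point of $S$ can lie just above the middle of the upper chain (outside $W \cup P_D$), and an edge of $T$ from such a point can cross a chain edge $u_i u_{i+1}$. Indeed, the entire reason the paper introduces the mapping $L$ is precisely to handle triangulations $T$ in which the boundary edges of $P_D$ are missing. Concretely, a wide edge may cross the ray beyond $u_1$ and then, after passing through $W_1$ and the segment $u_1 l_1$, exit $P_D$ through some $l_{j-1} l_j$ with $j > 1$; such an edge maps to $u_1 l_j$, not to $u_1 l_1$. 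So your reduction to the two special targets $u_1 l_1$ and $u_n l_n$ is unjustified, and the remainder of your argument (the extremal wide edge and the case analysis for its apex) establishes the lemma only for those two targets.

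The paper's proof avoids this pitfall by treating an arbitrary image $e' = u_i l_j$ from the outset. It looks at the triangle of $T$ incident to $e$ on the side of $e$ containing $u_i$ (and $l_j$, when they lie on the same side); if the apex is outside $D$, one of the other two sides is again a wide edge mapped to $e'$, and the argument repeats with that edge. When the apex finally lands in $D$, it must be $u_i$ or $l_j$ itself (otherwise the triangle would contain one of them), and the desired non-wide edge $\tilde e$ is found. Your extremal-edge idea is morally the same ``walk through triangles toward $e'$,'' but by fixing the wrong target you never get to deploy it in the cases that actually require work.
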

\begin{proof}
Let $e'$ be $u_i l_j$.
Consider first the case where both endpoints of $e'$ are on the same side of (the directed line supporting)~$e$.
Consider the empty triangle $t$ of $T$ incident to $e$ that has its apex $a$ on the same side of~$e$ as~$e'$.
If~$a$ is outside the hourglass of~$D$, then another wide edge~$f$ of $t$ is also mapped to~$e'$.
In that case we continue the argument with $f$, as $e$ and $f$ are both mapped to the same edge.
If $a$ is not outside the hourglass, then~$a$ equals either $u_i$ or $l_j$, as otherwise $t$ would contain one of them (recall that no edge of $t$ is incident to a point of $S$ inside the hourglass).
Hence, one of the edges of $t$ incident to $a$ is also mapped to $e'$.

For the case where the two endpoints of $e'$ are on different sides of $e$ (i.e., one of the endpoints of $e'$ is $u_n$ or $l_n$), the argument is almost the same.
Without loss of generality, let $i=n$ and~$l_j$ be to the right of $e$ (note that $j$ may be $n$).
Therefore, $u_n$ is to the left of $e$.
Again, consider the empty triangle $t$ of $T$ incident to $e$ with apex $a$ to the right of $e$.
Again, if~$a$ is outside the hourglass of $D$, there is another wide edge $f$ of $t$ that is also mapped to $e'$.
If $a$ is not outside the hourglass, then $a = l_j$; this follows from the construction of $D$ and the fact that the lower endpoint of $e$ is outside the hourglass.
Hence, an edge of $t$ incident to $a$ is also mapped to $e'$.
\myqed
\end{proof}

\begin{lemma}\label{lem_inner_degree}
Every point $p$ of $D$ is incident to at least one edge $e$ of $T$ such that~$e$ disconnects the hourglass of~$D$.
\end{lemma}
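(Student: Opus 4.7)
My plan is to establish the lemma by a geometric case analysis, leveraging the observation noted just after Definition~\ref{def_dc} that the kernel of $D$ lies in $W \cup P_D$ and in particular protrudes out of $P_D$ into $W$. By the left--right and upper--lower symmetries of the double chain, it suffices to treat an upper-chain vertex $p = u_i$. The core geometric lever is that any chord of $P_D$ joining a point on the upper chain to one on the lower chain must cross the kernel, and therefore cuts a connected component of $W$ into two pieces.

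For the extremal indices $i \in \{1, 2, n-1, n\}$, the unavoidable chain edge incident to $p$ lies on one of the four supporting lines defining $W_1$ or $W_n$ (by Definition~\ref{def_dc}). Such an edge therefore forms the common boundary of the two opposite angular regions of the relevant double wedge; together with its supporting line it separates these two regions and hence disconnects $W$ directly.

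For an interior index $2 < i < n-1$, I would examine the empty triangle of $T$ sitting inside $P_D$ on the polygon side of the chain edge $u_{i-1}u_i$, and then walk through the fan of empty triangles of $T$ at $u_i$ inside $P_D$. Lemma~\ref{lem_no_wide} is used to replace each wide edge encountered by an incident edge of $T$ anchored in $D$ with the same image under $L$; this lets me track incident edges of $u_i$ in $T$ until reaching one whose other endpoint lies on the lower chain. By the geometric lever above, such an edge crosses the kernel and therefore disconnects~$W$.

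The main obstacle is the ``ear'' configuration, in which $u_i$'s only incident edges in $T$ are its two chain neighbours together with the diagonal $u_{i-1}u_{i+1}$ and possibly edges to $S$-vertices lying above the upper chain. Here no edge at $u_i$ visibly descends through $P_D$ to the lower chain, so the fan argument does not produce a kernel-crossing chord at $u_i$. I would handle this subcase by invoking the convexity of the chains: the slope of the chain edge $u_{i-1}u_i$ lies strictly between those of the two supporting lines $L_1$ and $L_2$ (and of the two supporting lines of $W_n$), so the chain edge's supporting line cuts through both opposite angular regions of the corresponding double wedge. The chain edge, being incident to $p$ in $T$, thus still separates $W$ and closes the argument in the ear case.
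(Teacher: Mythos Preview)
The paper's proof is three lines and uses nothing beyond the definition of a triangulation: every $p\in D$ is a reflex vertex of the region $W\cup P_D$; if no edge of $T$ at $p$ entered this reflex cone, then two consecutive edges of $T$ around $p$ would enclose an angle exceeding~$\pi$, and this angular wedge contains points of the opposite chain, so the incident bounded face could not be a triangle---contradiction.

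Your route has a genuine gap precisely where you fall back on chain edges. In both the extremal-index case and the ear case you assert that the unavoidable chain edge $u_{i-1}u_i$ disconnects~$W$. It does not: the segment lies on the \emph{boundary} of $W\cup P_D$ (for $i\in\{1,2\}$ it sits on one of the defining lines of $W_1$, but the apex of that double wedge lies beyond $u_2$, so the segment does not separate the two opposite cones; for interior~$i$ it is just a boundary edge of $P_D$), and removing a boundary arc never disconnects a planar region. Your ``core geometric lever''---that every upper-to-lower chord of $P_D$ crosses the kernel---is also false as stated: the diagonal $u_1 l_2$ lies entirely to one side of the apex of $W_1$, short of the kernel. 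Most importantly, the ear configuration you isolate simply cannot occur, and seeing why \emph{is} the paper's argument: with only the two chain edges (and possibly edges leaving on the convex side) at $u_i$, the bounded face of $T$ on the $P_D$ side would have interior angle $>\pi$ at $u_i$ while containing the lower chain. Once you recognise this, the case split, the fan walk, and the appeal to Lemma~\ref{lem_no_wide} all become unnecessary.
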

\begin{proof}
This follows directly from the construction of $D$.
Suppose there is no such edge, and recall that there is also no edge incident to a point in the interior of the hourglass. Then there is an angle larger than~$\pi$ incident to~$p$, and the wedge defined by this angle contains points.
This contradicts the fact that $T$ is a triangulation.
\myqed
\end{proof}

\begin{lemma}\label{lem_local_triangulation}
$T'$ is a triangulation of $P_D$.
\end{lemma}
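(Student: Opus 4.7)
The plan is to show three things in turn: (i) every constructed edge $e' \in T'$ is a chord of $P_D$, (ii) no two edges of $T'$ cross in the interior of $P_D$, and (iii) $T'$ is inclusion-maximal among crossing-free chord sets of $P_D$, so it triangulates the polygon. The guiding intuition, as suggested by the description preceding the lemma, is that $L$ can be realised as a continuous motion in which each endpoint of an edge of $T$ lying outside $P_D$ slides monotonically along the corresponding ray or chain edge toward its target vertex of $P_D$. If I can show that no crossing is ever created during this sliding, planarity of $T'$ follows from planarity of $T$.

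First I would check (i). For an edge of $T$ with both endpoints in $D$, the claim is immediate since the edge is already a chord. For an edge passing through $W$, the endpoint outside $D$ is snapped to some $u_i$ or $l_j$ with $i$ or $j$ possibly equal to $n$, by the definition of $L$ and the geometry of the wedge $W$; the resulting segment has one endpoint on the upper chain and one on the lower chain (or equals an edge of the chain), and by the defining visibility property of the double chain ($u_i$ sees every $l_j$), it lies inside $P_D$.

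For (ii), assume two edges $e_1', e_2'$ of $T'$ cross properly. By Lemma \ref{lem_no_wide}, I may replace any wide preimage by an edge of $T$ sharing its image and having an endpoint in $D$; so I may assume each $e_i$ is an edge of $T$ with at least one endpoint on the boundary of $P_D$. The edges $e_1$ and $e_2$ do not cross in $T$. Now I trace the continuous sliding: each non-$D$ endpoint moves along the boundary of $W \cup P_D$ (either along a chain edge or inward along a defining ray), always staying on the same side of any edge of $T$ incident to the point that is its target. Hence the cyclic order of $T$-edges around each target vertex $u_i$ or $l_j$ is preserved under $L$ (several $T$-edges may collapse to the same $T'$-edge, but their cyclic positions merge into a single slot), and the cyclic order around each outside endpoint is irrelevant after collapsing. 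A proper crossing in $T'$ would therefore force two $T$-edges to swap cyclic positions around a common vertex during the sliding, which is impossible since no edge ever sweeps across a point of $D$ (the sliding is confined to $W$, whose interior contains no point by Definition~\ref{def_dc}). This yields the desired contradiction.

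For (iii) I count. By Lemma \ref{lem_inner_degree}, every $p \in D$ is incident to at least one edge of $T$ that disconnects $W$; such an edge is mapped by $L$ to a chord of $P_D$ incident to $p$, so every vertex of $P_D$ appears as an endpoint of some edge of $T'$. Together with the plane graph established in (ii), together with the fact that the unavoidable edges of the chains are present, a standard Euler-formula count (combined with the observation that each triangle of $T$ incident to $W$ contributes a face of $T'$ inside $P_D$) shows that $T'$ has $2n-3$ diagonals, which is the number of diagonals of any triangulation of the $2n$-gon $P_D$. Hence $T'$ triangulates $P_D$.

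The main obstacle is step (ii): the snapping operation is discrete and a priori can bring two edges into crossing position, so Lemma \ref{lem_no_wide} is essential to reduce to edges anchored in $D$, and the continuous-motion viewpoint must be made precise enough (via cyclic orders around target vertices) to rule out crossings; the other two steps are then bookkeeping.
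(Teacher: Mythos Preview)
Your steps (i) and (ii) are in the spirit of the paper's argument (the paper is in fact terser than you are on planarity, simply invoking Lemma~\ref{lem_no_wide} and saying the mapping ``moves'' endpoints to the next point of~$D$). The real problem is step~(iii).

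Your maximality argument is a counting argument, and the count does not go through as written. You want to conclude that $T'$ has exactly $2n-3$ diagonals, but the map $L$ is many-to-one on edges (that is the entire content of Lemma~\ref{lem_no_wide}), so you cannot read off $|T'|$ from the number of $T$-edges meeting~$W$. You try to patch this with the parenthetical ``each triangle of $T$ incident to $W$ contributes a face of $T'$ inside $P_D$'', but this is exactly what is at stake: a triangle of $T$ can degenerate under $L$ (two of its sides map to the same chord, or all three collapse), and distinct triangles of $T$ can map to the same region of $P_D$. Knowing only that $T'$ is plane, contains the boundary, and uses every vertex (Lemma~\ref{lem_inner_degree}) does not pin down the edge count; Euler's formula needs the face count, which you have not established independently.

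The paper avoids counting altogether and argues maximality directly: assume an empty quadrilateral $q$ in $T'$. If $q$ is non-convex, its reflex vertex has no incident chord in $T'$, contradicting Lemma~\ref{lem_inner_degree}. If $q$ is convex it must be of the form $l_i l_{i+1} u_{j+1} u_j$; then one shows $l_i u_j \in T$, and the leftmost $T$-edge blocking a diagonal of $q$ forces (via the triangle-apex argument of Lemma~\ref{lem_no_wide}) a chord into $q$, a contradiction. This case analysis is what your Euler-formula shortcut is trying to skip, and it is genuinely needed.
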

\begin{proof}
We have to prove that $T'$ is crossing-free and maximal in $P_D$.

Lemma~\ref{lem_no_wide} allows us to only consider non-wide edges.
With all relevant remaining edges of $T$ being incident to a point in $D$, the fact that $T'$ is crossing-free follows from $T$ being crossing-free, as the mapping only ``moves'' the endpoints of the edges of~$T$ to the next point of~$D$.

If $T'$ were not maximal, there would exist a quadrilateral $q$ inside $P_D$ that is spanned by points of~$D$ and whose interior does not intersect any edge.
If $q$ is not convex, this would mean that no edge of $T$ is incident to the reflex vertex of the quadrilateral.
But this cannot happen due to Lemma~\ref{lem_inner_degree} (an edge at that vertex in~$T$ that dissects the hourglass is mapped to an edge with the same property).
If~$q$ is convex, it is of the form $l_i l_{i+1} u_{j+1} u_j$.
If an edge of $T$ would have passed through the side~$l_i u_j$, the quadrilateral would not be empty of edges.
Hence, $l_i u_j$ must have been a part of $T$.
See \figurename~\ref{fig_no_quadrilateral}.
Since there are points to the right of the edge $l_i u_j$, there has to be a triangle of~$T$ adjacent to $l_i u_j$ having its third vertex to the right of that edge.
If the third vertex of the triangle is to the right of $l_i l_{i+1}$ or to the left of $u_j u_{j+1}$, one side of the triangle is mapped to a diagonal of~$q$ or the triangle would contain $l_{i+1}$ or $u_{j+1}$.
However, if the third vertex of the triangle is to the left of $l_i l_{i+1}$ and to the right of $u_j u_{j+1}$, it is inside the hourglass of the double chain.
Hence, there is no empty quadrilateral in $P_D$, which completes the proof.
\myqed
\end{proof}

\begin{figure}
\centering
\includegraphics{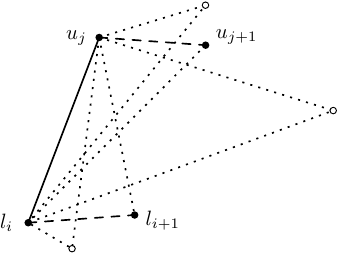}
\caption{The different possibilities for the triangle to the right of $l_i u_j$ in the triangulation~$T$.}
\label{fig_no_quadrilateral}
\end{figure}

At first sight, it might be conceivable that a flippable edge~$e$ of $T$ is mapped to a non-flippable edge $e'$ and that flipping $e$ to an edge $f$ results in an illegal flip of~$e'$ in the mapped triangulation~$L(T)$.
Recall, however, that the flip operation is defined as removing one edge of a triangulation and replacing it by another one.
Since the previous lemma proves that before and after the flip we have a triangulation given by mapping each edge, we know that if flipping $e$ changes~$L(T)$, then $e'$ must be flippable as well.
(Note, however, that if flipping $e$ does not change $L(T)$, there is another edge mapped to $e'$, and $e'$ may or may not be flippable; this will be discussed in Lemma~\ref{lem_mapping}.)

Since any flip in $T$ results in at most one edge being flipped in $T'$, the lower bound construction holds:
a shorter flip sequence with points outside the hourglass would immediately imply a shorter flip sequence between $T_1$ and $T_2$ in the proof of Theorem~\ref{thm_dc}.
This completes the proof of Proposition~\ref{prop_no_outer}.

\subsection{Multiple Double Chains}
Proposition~\ref{prop_no_outer} is, however, of little use when we try to construct a point set that contains many double chains and try to argue that the flip distance between two triangulations of the set is bounded by the sum of the distances between the local triangulations of these double chains.
One could imagine that a flip in the overall triangulation leads to changes in the local triangulations of several double chains.
In this section, we prove that this is not possible. %
Keep in mind that it is a necessary condition that, for any double chain~$D$, all other double chains are outside the hourglass of~$D$ and their polygons do not intersect.

\begin{lemma}\label{lem_mapping}
Let $e$ be a flippable edge of any triangulation $T$ of $D \cup S$ that is mapped to the edge $e'$ in the corresponding local triangulation~$T'$ of a double chain $D$.
Then flipping~$e$ changes the local triangulation only if no other edge is mapped to $e'$.
\end{lemma}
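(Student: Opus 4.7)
The plan is to establish the contrapositive: suppose some edge $\tilde e \in T$ with $\tilde e \neq e$ also satisfies $L(\tilde e) = e'$. I want to show that $L(T_{\text{new}}) = L(T)$, where $T_{\text{new}} = (T \setminus \{e\}) \cup \{e^*\}$ denotes the triangulation obtained from $T$ by flipping $e$ to $e^*$.

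First I would note that every edge $f \in T \setminus \{e\}$ lies in $T_{\text{new}}$ as well, and its image $L(f)$ is computed purely edge-by-edge, so it contributes the same edge of $P_D$ to both $L(T)$ and $L(T_{\text{new}})$. In particular, since by hypothesis $\tilde e \in T_{\text{new}}$ and $L(\tilde e) = e'$, the edge $e'$ is retained in $L(T_{\text{new}})$. Every edge of $L(T)$ is either the image of some $f \in T \setminus \{e\}$ (still present in $L(T_{\text{new}})$) or is $e' = L(e)$ itself (preserved by $\tilde e$). This gives the inclusion $L(T) \subseteq L(T_{\text{new}})$.

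Next I would establish the reverse direction: every edge of $L(T_{\text{new}})$ either comes from some $f \in T_{\text{new}} \setminus \{e^*\} = T \setminus \{e\}$ (hence is already present in $L(T)$) or coincides with $L(e^*)$, the image of the single new edge introduced by the flip. This yields $L(T_{\text{new}}) \subseteq L(T) \cup \{L(e^*)\}$, regardless of whether $L(e^*)$ is a genuine edge of $P_D$ or a degenerate image (a single vertex, in which case the inclusion tightens to $L(T_{\text{new}}) \subseteq L(T)$).

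Finally I would invoke Lemma~\ref{lem_local_triangulation} applied to both $T$ and $T_{\text{new}}$ to conclude that $L(T)$ and $L(T_{\text{new}})$ are both triangulations of the same polygon $P_D$, and hence contain precisely the same number of edges. Together with the sandwich $L(T) \subseteq L(T_{\text{new}}) \subseteq L(T) \cup \{L(e^*)\}$, this cardinality matching forces the left and right inclusions to be equalities, so $L(T_{\text{new}}) = L(T)$, completing the proof of the contrapositive.

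The proof is short and the only subtle point is the bookkeeping underlying the inclusions, namely that $L(T)$ is precisely the set of edges induced by the images of the edges of $T$ (plus the unavoidable boundary edges of $P_D$ inherited from the chain edges of $D$). Once that is made precise, the main driver of the argument is simply the cardinality equality between the two triangulations of $P_D$, which immediately rules out the possibility that $L(e^*)$ could introduce a new edge while $e'$ is simultaneously retained by $\tilde e$. I expect no substantial obstacle beyond being careful with these definitions.
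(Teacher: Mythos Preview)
Your argument is correct and follows essentially the same route as the paper's proof: remove $e$ (the local triangulation stays $T'$ because $\tilde e$ still contributes $e'$), then observe that the new edge $e^*$ cannot introduce a fresh edge of $P_D$ since $L(T_{\text{new}})$ must again be a triangulation of $P_D$. The paper compresses your cardinality sandwich into the single phrase ``as otherwise $T'$ would not be a triangulation,'' but the underlying reasoning is the same.
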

\begin{proof}
Suppose $e$ is not the only edge mapped to $e'$.
If we remove $e$ from~$T$, the graph on $D$ defined by the mapping is still the local triangulation~$T'$.
If we add the new edge $f$ after the removal of $e$, $f$ must also be mapped to some existing edge $f'$ in $T'$ (which might not be~$e'$) or is not mapped at all, as otherwise $T'$ would not be a triangulation. 
\myqed
\end{proof}

Note that because of Lemma~\ref{lem_mapping}, flipping an edge that is wide for a double chain does not change the local triangulation of that double chain.
Therefore, a flip can only change at most four local triangulations.
Actually, we can prove the following more accurate result.

\begin{lemma}\label{lem_no_common_flip}
Let $D_1$ and $D_2$ be two double chains in a point set $S$.
If each of $D_1$ and $D_2$ is outside the hourglass of the other and $P_{D_1} \cap P_{D_2} = \emptyset$, each flip in a triangulation of $S$ affects at most one of the two local triangulations.
\end{lemma}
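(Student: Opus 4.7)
My plan is to argue by contradiction. Suppose a flip replacing $e$ by $f$ affects both $T'_{D_1}$ and $T'_{D_2}$. By Lemma~\ref{lem_mapping}, $e$ is the unique edge of $T$ mapped to its image $e'_i$ in $T'_{D_i}$ for $i=1,2$, and similarly for $f$. First I would show that $e$ must have exactly one endpoint in each of $D_1$ and $D_2$: an edge that neither lies in $P_{D_i}$ nor intersects $W_{D_i}$ contributes nothing to $T'_{D_i}$, so for $e$ to have a nontrivial image there it must intersect $W_{D_i}$; if moreover $e$ had no endpoint in $D_i$ it would be wide, and Lemma~\ref{lem_no_wide} would supply a second preimage of $e'_i$, contradicting uniqueness. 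Writing $e = q_1 q_3$ with $q_1 \in D_1$ and $q_3 \in D_2$, and applying the same reasoning to $f = q_2 q_4$, I may assume up to symmetry (since $e$ and $f$ are the crossing diagonals of a convex quadrilateral) that $q_2 \in D_1$ and $q_4 \in D_2$. The flip quadrilateral thus has cyclic order $q_1, q_2, q_3, q_4$ with the two $D_1$-vertices adjacent and the two $D_2$-vertices adjacent.

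The core of the proof is then a geometric argument on the empty triangle $t = q_1 q_3 q_4$ of $T$ incident to $e$. Since $t$ contains no point of $S$ in its interior and, by general position, no vertex of $D_1$ lies on $\partial t \setminus \{q_1\}$, no point of $D_1$ sits anywhere in $t$ other than $q_1$ itself. The boundary $\partial(P_{D_1} \cup W_{D_1})$ is a polyline whose breakpoints are precisely the vertices of $D_1$ and whose polyline-edges are the chain edges of $D_1$ together with the four wedge-defining rays. Consequently the arc of $\partial(P_{D_1} \cup W_{D_1})$ bordering the connected component of $t \cap (P_{D_1} \cup W_{D_1})$ that has $q_1$ in its closure contains no $D_1$-vertex in its relative interior, and hence lies inside a single such polyline-edge. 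Both $e$ and $q_1 q_4$ start at $q_1$ and terminate at points outside $P_{D_1} \cup W_{D_1}$, so each exits the region through this common boundary element; by the definition of the local triangulation map, both are therefore mapped in $T'_{D_1}$ to the same edge $q_1 v$. This exhibits a second preimage of $e'_1$, contradicting Lemma~\ref{lem_mapping}.

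The main obstacle I anticipate is that $P_{D_1} \cup W_{D_1}$ is non-convex, so the triangle $t$ may meet it in several connected components, and it is conceivable that $q_1 q_4$ does not actually enter the component adjacent to $q_1$ at all; in that situation $q_1 q_4$ need not map to the same image as $e$. Handling this should require a case distinction: one can either run the symmetric argument on the other triangle $q_1 q_2 q_3$ incident to $e$, or apply Lemma~\ref{lem_no_wide} to the side $q_3 q_4$ of $t$, which in the problematic configuration has to cross $W_{D_1}$ (and is therefore wide for $D_1$). In every case the aim is the same, namely to produce a second edge of $T$ with the same image as $e$ in at least one of $T'_{D_1}$ or $T'_{D_2}$, thereby contradicting Lemma~\ref{lem_mapping} and ruling out the assumed double effect of the flip.
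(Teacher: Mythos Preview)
Your proposal is correct and follows essentially the same route as the paper: reduce to a convex flip quadrilateral with two adjacent vertices in each $D_i$, then observe that the side $q_1q_4$ of the empty triangle $q_1q_3q_4$ is mapped to the same edge of $T'_{D_1}$ as the diagonal $e=q_1q_3$, contradicting Lemma~\ref{lem_mapping}. The paper's own proof simply asserts that ``$ad$ dissects the wedge and crosses the same edge of $P_{D_1}$ as $ac$'' without further comment, so the non-convexity obstacle you flag is exactly the detail the paper glosses over; your suggested remedies (pass to the other triangle, or invoke Lemma~\ref{lem_no_wide} on the wide side) are appropriate ways to close it.
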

\begin{proof}
If the flipped edge $e$ or its replacement $f$ do not both have an endpoint in the same double chain $D$, then at least one of $e$ or $f$ either does not dissect the corresponding hourglass or is a wide edge of $D$.
It follows from Lemma~\ref{lem_mapping} that such a flip does not influence the local triangulation of~$D$.
Hence, in the only remaining case there is a quadrilateral that has two adjacent points in $D_\mathrm{1}$ and two adjacent points in $D_\mathrm{2}$ and contains a flippable edge.
Let the quadrilateral be $abcd$.
Without loss of generality, let $a$ and $b$ be part of $D_1$ and $e = ac$.
See \figurename~\ref{fig_no_double_flip}.
Suppose, for the sake of contradiction, that we flip the edge $ac$ and the flip changes the local triangulation of $D_\mathrm{1}$.
Then $ac$ has to dissect the hourglass of~$D_\mathrm{1}$.
Then, however, $ad$, too, dissects the hourglass and crosses the same edge of $P_{D_\mathrm{1}}$ as~$ac$ (since the triangle $acd$ is empty).
Hence, $ac$ and $ad$ are mapped to the same edge in the local triangulation of $D_\mathrm{1}$, a contradiction due to Lemma~\ref{lem_mapping}.
\myqed
\end{proof}

\begin{figure}
\centering
\includegraphics{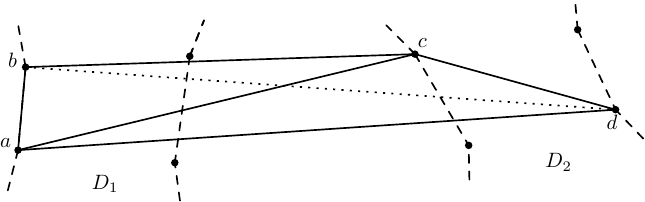}
\caption{An example illustrating why a flip cannot affect more than one local triangulation.
The edge~$ac$ is mapped to the same edge as~$ad$ in the local triangulation of~$D_1$.}
\label{fig_no_double_flip}
\end{figure}

\begin{corollary}\label{cor_distance}
If a point set consists of $m$ double chains, each of size $2n$, and for every double chain all other points are outside its hourglass, then the flip graph diameter of the whole set is in $\Omega(mn^2)$.
\end{corollary}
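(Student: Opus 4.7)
The plan is to stitch together the single-double-chain lower bound of Proposition~\ref{prop_no_outer} across the $m$ chains, using Lemma~\ref{lem_no_common_flip} to argue that work done in one chain cannot simultaneously help another. Concretely, for each double chain $D_i$ I would pick two triangulations playing the roles of $T_1$ and $T_2$ from Theorem~\ref{thm_dc}: one in which $u_1^{(i)}$ is joined to all of $l_1^{(i)},\dots,l_n^{(i)}$ and one in which $l_1^{(i)}$ is joined to all of $u_1^{(i)},\dots,u_n^{(i)}$. Because every other point of the set lies outside of $D_i$, these local patterns can be extended (independently for each $i$) to full triangulations $T_1$ and $T_2$ of $S\cup\bigcup_i D_i$, simply by completing the exterior region in any fixed common way.

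Next I would consider an arbitrary flip sequence $T_1 = T^{(0)} \to T^{(1)} \to \cdots \to T^{(k)} = T_2$ and apply the local-triangulation map $L_i$ of Section~\ref{sec_double_chain} to each chain $D_i$. By Lemma~\ref{lem_mapping}, each flip changes the local triangulation of a given $D_i$ by at most one flip; by Lemma~\ref{lem_no_common_flip}, if the flip affects the local triangulation of $D_i$, then it does not affect the local triangulation of any other $D_j$. So each flip in the global sequence can be charged to at most one chain, where it contributes at most one flip to the corresponding local sequence.

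The local sequence in chain $D_i$ transforms $L_i(T_1)$ into $L_i(T_2)$, which by construction are exactly the two polygon-triangulations of Theorem~\ref{thm_dc}, hence at flip distance at least $(n-1)^2$. Summing the charges gives
\[
k \;\geq\; \sum_{i=1}^{m} (n-1)^2 \;=\; m(n-1)^2 \;=\; \Omega(mn^2),
\]
which is the desired lower bound on the diameter.

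The only subtle step is making sure the hypotheses of Lemma~\ref{lem_no_common_flip} are actually in force throughout the sequence; this is immediate from the assumption that every other point (and in particular every point of every other double chain) lies outside each $D_i$ and that $P_{D_i}\cap P_{D_j}=\emptyset$, conditions that depend only on the point set, not on the current triangulation. I expect no real obstacle beyond this bookkeeping: the bulk of the work (showing that the local map is well-defined, that it tracks at most one flip, and that flips in different chains do not overlap) has already been carried out in Lemmas~\ref{lem_local_triangulation}, \ref{lem_mapping}, and~\ref{lem_no_common_flip}.
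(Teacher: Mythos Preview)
Your argument is correct and is exactly the reasoning the paper intends: the corollary is stated without proof immediately after Lemma~\ref{lem_no_common_flip}, and your charging scheme---apply each local map $L_i$, use Lemma~\ref{lem_no_common_flip} so that every global flip is counted against at most one chain, then invoke the $(n-1)^2$ lower bound from Theorem~\ref{thm_dc}/Proposition~\ref{prop_no_outer} in each chain---is precisely how it is meant to follow. The only point worth flagging is that the corollary's hypothesis (``all other points are outside'') does not by itself literally guarantee $P_{D_i}\cap P_{D_j}=\emptyset$, which Lemma~\ref{lem_no_common_flip} requires; the paper notes this extra disjointness condition explicitly just before Lemma~\ref{lem_no_common_flip}, and you correctly include it as part of your assumptions.
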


\section{The Reduction}
\label{sec_reduction}
Now we have gathered enough knowledge about double chains as sub-configurations in order to use them as the main building blocks in a reduction.
We reduce from \textsc{Minimum Vertex Cover}, which is known to be APX-complete~\cite{vertex_cover_apx}.\footnote{A previous version of this paper used a reduction from \textsc{Minimum Vertex Cover} on 3-regular graphs, which is also known to be APX-complete~\cite{cubic_apx}. However, as pointed out by an anonymous referee, reducing from the general version gives a better lower bound on the performance ratio without any substantial changes to the reduction.} %

\begin{problem}[\textsc{Minimum Vertex Cover}]
Given a simple graph $G = (V, E)$ with $n = |V|$, choose a set $C \subset V$ such that every edge in $E$ has at least one vertex in $C$ and such that $|C|$ is minimized.
\end{problem}

We follow the common approach of embedding the graph~$G$ and transforming its elements to geometric gadgets.
The gadgets consist of points together with the corresponding edges in the source triangulation $T_1$ and in the target triangulation~$T_2$.
We give the overall idea of how to embed the gadgets; for a detailed description on how to exactly place the points with rational coordinates having a representation bounded by a polynomial in the input size using polynomial time see the appendix.

\subsection{Gadgets}
Given a graph $G = (V, E)$ for which we have to solve the \textsc{Minimum Vertex Cover} problem, with $n = |V|$ and $m = |E|$, we place the elements of $V$ as the vertices of a convex $n$-gon and draw the straight-line edges between them (where the edges will not be part of the final construction).
Hence, we can consider~$G$ being a geometric graph in the remainder of this section.
For each edge~$e$ mark a point $c_e \in e$ that is not on a crossing.
Let $\vec t$ be a vector perpendicular to~$e$ of sufficiently small length (which will be specified in the appendix).
Make two copies of $e$ and translate them by $\vec t$ and~$-\vec t$, respectively, to obtain the \emph{tunnel} of the edge, i.e., the quadrilateral defined by the two copies of~$e$.
Then slightly ``bend'' the copies towards the (geometric) midpoint of $e$ to obtain two circular arcs $A_e$ and~$A'_e$.
The endpoints of the original edge~$e$ have to see any point on $A_e$ and~$A_e'$.
See \figurename~\ref{fig_k33}.

\begin{figure}
\centering
\includegraphics[width=\textwidth]{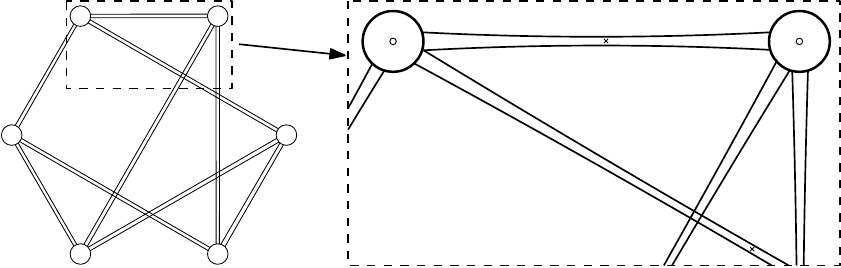}
\caption{An embedding of a graph with the (almost straight) circular arcs at each edge ending at a fixed distance around each vertex.}
\label{fig_k33}
\end{figure}

\subsubsection{Edge Cores}
Instances of the double chain are the main ingredient in our reduction.
They are contained in the gadgets representing the edges of~$G$.
See~\figurename~\ref{fig_edge_center} for an illustration of the construction.
Let $e$ be a straight-line edge of $G$, drawn between the points $v$ and $v'$.
In a close neighborhood of~$c_e$, place a double chain $D_e$, the \emph{edge core}, of $2d$ points (we will fix the value of~$d$ later) along $A_e$ and~$A'_e$ such that the two chains are separated by the supporting line of~$e$.
Note that the endpoints $v$ and $v'$ of $e$ are the only points that are not outside the hourglass of $D_e$, and they are also in the flip-kernel of~$D_e$ (remember that $A_e$ and $A_e'$ can be chosen sufficiently flat).
The edge cores are the only gadgets that have different edges in the source and in the target triangulation.
Draw the edges that define the polygon $P_{D_e}$ in both $T_1$ and~$T_2$.
Then triangulate the interior of $P_{D_e}$ with one extreme triangulation of $D_e$ in~$T_1$ and with the other extreme triangulation in $T_2$.
We refer to the process of flipping edges that are incident to an edge core as \emph{transforming an edge core}.

\begin{figure}
\centering
\includegraphics{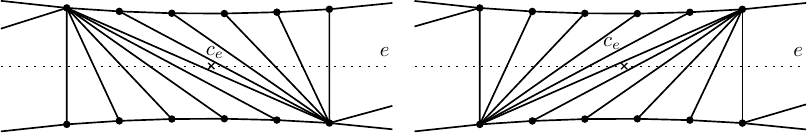}
\caption{The double chain at the center of an edge with the source and the target triangulation.}
\label{fig_edge_center}
\end{figure}

\subsubsection{Crossings}
If two straight-line edges $e$ and $f$ of $G$ cross, also their corresponding circular arcs cross.
The four circular arcs define a region bounded by four pieces of the original arcs.
Place one point at each of the four crossings of the arcs (we will actually place the points not exactly on the crossings, but close, see the appendix).
In both source and target triangulation draw the edges connecting two points that are consecutive on any circular arc, which results in a crossing being represented by a convex quadrilateral, to which we add an arbitrary diagonal.
Note that the crossing gadgets do not overlap with the edge core gadgets, as the edge cores are placed in the neighborhood of $c_e$, which was chosen not to be at a crossing.

\subsubsection{Wirings}
\emph{Wirings} are gadgets that represent the elements of $V$.
See~\figurename~\ref{fig_wiring} for an illustration.
Consider any vertex~$v$ of $G$ and a small circle $C$ with $v$ in the embedding as its center.
This \emph{point} $v$ is part of the triangulated point set.
Place points on the crossings of $C$ with the arcs of the edges incident to~$v$ in~$G$.
Since the graph is embedded on a convex $n$-gon and due to the small length of the vector $\vec{t}$, these points occupy strictly less than half of~$C$.
This allows us to place two chains $L$ and $R$, each of $w-1$ points (the value of $w$ is to be defined later) on $C$ in a way that any line between one point of~$R$ and one point of~$L$ separates~$v$ from the remaining construction.
In both the source and target triangulation draw the edges between consecutive points on $C$.
Draw a zig-zag path through the points of $L$, $R$, and the first and last point where $C$ crosses the arcs of the edges (giving $2w$ points in total).
We call these edges the \emph{zig-zag edges} of the wiring.
Connect $v$ to the first point of~$L$ and to the first point of~$R$.
The remaining part may be triangulated arbitrarily.

\begin{figure}
\centering
\includegraphics{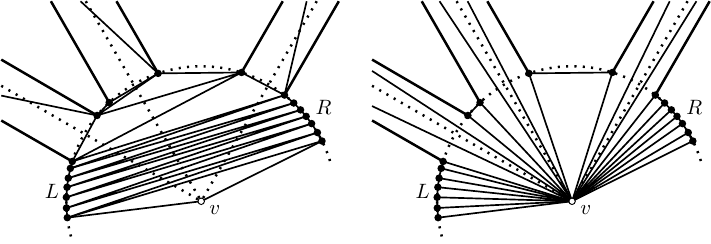}
\caption{Left: A wiring with its initial and final triangulation (solid).
Right: A triangulation that allows to quickly perform a transformation of the edge cores.
The parts of the auxiliary construction shown in \figurename~\ref{fig_k33} are dotted.
}
\label{fig_wiring}
\end{figure}

The remaining faces in the two plane graphs we obtained so far are triangulated arbitrarily, however in a way that the resulting triangulations $T_1$ and $T_2$ have the same edges except at the edge cores.

\subsection{Analysis}
\newcommand{\opt}{\ensuremath{\mathrm{opt}}}
The basic idea of the construction is that a flipping algorithm that gives the shortest flip distance or a good approximation of it has to choose which wirings to flip (requiring $4w-2$ flips each for flipping the zig-zag edges of a wiring away and back again) in order that the triangulation of an edge core can be transformed using the point in its flip-kernel at the chosen wiring.
Also, the at most~$4x+2$ edges between and at the crossings need to be flipped away.
We will fix the values of $w$ and $d$ to force this behavior of any flipping algorithm that uses fewer flips than a trivial upper bound.
Every edge of $G$ will be covered; using a vertex of~$G$ for covering corresponds to flipping the zig-zag edges in the corresponding wiring.

Let $v$ and $v'$ be any two adjacent vertices in $G$.
The exact number of edges in $T_1$ or~$T_2$ intersected by the segment $vv'$ in the drawing may differ with the choice of~$v$ and~$v'$ because (i) the number of crossings of each edge of~$G$ may differ, and (ii) the triangulation of the wiring gadget at the region where the edge gadgets enter it is not completely symmetric.
Let $x$ be the maximum number of crossings of a single edge in~$G$.
For every wiring, the number of edges that are intersected by the segment~$vv'$ in addition to the zig-zag edges is at most $2n-3$ (the remaining part is a $2n$-gon, see \figurename~\ref{fig_wiring}).
We denote the sum of these numbers over all wirings by~$\tau$; we have $\tau \in O(n^2)$.

The following lemma shows how to deduce a flip sequence in our construction from a vertex cover of size $k$.
Note that we do not claim that this is the optimum if $k$ is optimal.
\begin{lemma}\label{lem_flip_from_cover}
If there exists a vertex cover of size $k$ in $G$, then there exists a flip sequence between $T_1$ and $T_2$ of length at most
\begin{equation*}
\delta_k = 2(k(2w-1) + m(4x + 2d) + \tau) \enspace .
\end{equation*}
\end{lemma}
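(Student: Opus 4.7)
The plan is constructive: from a vertex cover $C \subseteq V$ with $|C| = k$ I will exhibit a flip sequence $T_1 \to T_2$ of length at most $\delta$. The controlling observation is that $C$ dictates which wirings to activate: since every $e \in E$ is covered, one may assign to each $e$ an endpoint $v_e \in C$. By construction $v_e$ lies in the kernel of the edge center $D_e$, so, once we have routed a triangulation edge to it, $v_e$ plays the role of the Steiner point of Figure~\ref{fig_dc_steiner} and enables the double-chain transformation in $O(d)$ rather than $\Theta(d^2)$ flips.

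I would split the sequence into three phases. In the \emph{opening phase}, for each $v \in C$ I flip the zig-zag path and the remaining interior edges of the wiring at $v$ so that $v$ becomes adjacent to every boundary point on its circle (as in the right of Figure~\ref{fig_wiring}); a direct count gives at most $2w-1$ flips per wiring. In the \emph{processing phase}, I iterate over $e \in E$: for each edge, I first flip the crossing-gadget diagonals along the tunnel of $e$ between $v_e$ and $c_e$, together with the constantly-many wiring-boundary edges near its entrance, to expose $v_e$ to $u_1$ and $l_1$ of $D_e$; then I execute the $4d-4$-flip Steiner-point protocol to replace $T_1$'s triangulation of $D_e$ with $T_2$'s. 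In the \emph{closing phase}, since the wirings, the crossings, and all regions outside the edge centers carry identical triangulations in $T_1$ and $T_2$, I reverse every flip that was made outside the edge centers, arriving at $T_2$.

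Accounting gives exactly the claimed bound. Opening plus closing the wirings contributes $2k(2w-1)$. For each of the $m$ edges, the Steiner-point protocol contributes $4d-4 \leq 4d$ flips at $D_e$, and the routing flips at the at-most-$x$ crossings on each side of $c_e$ are bounded by $8x$ in total once both the out-trip and the return-trip are charged. The boundary flips near each wiring-edge interface sum to $2\tau$ over all edges, by the choice of the constant $c$ defining $\tau$. Altogether this is $2k(2w-1) + m(8x + 4d) + 2\tau = 2(k(2w-1) + m(4x+2d) + \tau) = \delta$.

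The step I expect to be trickiest is certifying geometric feasibility: every flip must be a diagonal of a convex empty quadrilateral at the moment it is performed. Three features of the construction (Appendix~\ref{apx_coordinates}) make this work. First, after opening the wiring at $v_e$, the point $v_e$ sees across the tunnel of $e$ into the kernel of $D_e$, so the Steiner-point protocol of Figure~\ref{fig_dc_steiner} is indeed applicable. Second, each crossing gadget is by construction a convex quadrilateral with a single interior diagonal, hence that diagonal is always flippable. Third, the edge centers are pairwise outside one another and their polygons are disjoint, so processing them in sequence does not interfere, in the spirit of Lemma~\ref{lem_no_common_flip}. Once these three feasibility checks are in hand, the above schedule can be executed literally and yields the required flip sequence.
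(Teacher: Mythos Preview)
Your proof is correct and follows essentially the same route as the paper's own argument: choose a cover vertex $v_e$ for each edge, open the $k$ wirings at the cover vertices ($2w-1$ flips each), route along each tunnel through the crossing gadgets (at most $4x$-many flips one way), apply the Steiner-point protocol of Figure~\ref{fig_dc_steiner} at the edge center ($4d-4$ flips), and finally undo all flips outside the edge centers. Your three-phase organization and the explicit feasibility discussion are somewhat more detailed than the paper's terse version, but the accounting and the underlying construction are the same.
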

\begin{proof}
Let $C$ be a vertex cover of $G$ with $k = |C|$.
Let $v \in C$ be a vertex used to cover an edge.
We use $v$ to transform the edge cores of the adjacent edges in $G$ (if they have not already been transformed).
We need to flip all zig-zag edges in the wiring to $v$, which takes $2w-1$ flips.
Then we need at most $2n-3$ flips (counted by $\tau$) for the remaining wiring edges, as well as two further flips for the edges before the first crossing and two flips for the first crossing itself.
All in all, with this method we need up to $4x+2$ flips for the crossing gadgets to make the first edge of the edge core visible to~$v$.
Then, we need $2d-2$ flips to make the edges incident to~$v$ (see \figurename~\ref{fig_dc_steiner}~(d)).
Flipping in the desired way we need at most $\delta_k$ flips.
\end{proof}

On the other hand, a flip sequence should define a vertex cover.
For the following lemma, we fix
\[
w > \frac{c(m(4x+2d) + \tau) + 1}{2}
\]
for any constant $c > 1$; further, we choose $d$ such that $(d-1)^2 > \delta_n = 2(n(2w-1) + m(4x+2d) + \tau)$ (note that since the term to the right is linear in~$d$, such a value of~$d$ clearly exists and is polynomial in the problem size).
\begin{lemma}\label{lem_cover_from_flip}
If there exists a flip sequence between $T_1$ and $T_2$ of length $\delta$, then there exists a vertex cover of size at most
\begin{equation}\label{eqn_delta_to_k}
k = \left \lfloor \frac{\delta}{4w-2} \right \rfloor \enspace.
\end{equation}
In particular, for the flip distance $\delta_\opt$ between $T_1$ and $T_2$ and a minimum vertex cover of size $k_\opt$, we have 
\begin{equation}\label{eqn_delta_opt_to_k_opt}
k_\opt = \frac{\delta_\opt - R}{4w-2}
\end{equation}
for some positive $R < \frac{4w-2}{c}$.
\end{lemma}
\begin{proof}
We argue that the choice of $d$ forces an effective algorithm to flip the zig-zag edges of wirings (which corresponds to covering vertices), and that the choice of $w$ allows to transform the number of flips to the size of the corresponding vertex cover.

If $\delta \geq (d-1)^2$, then the choice of $d$ implies that $k \geq n$ in~(\ref{eqn_delta_to_k}), which trivially implies that the lemma is true in that case.
We therefore assume that $\delta < (d-1)^2$.
If, for any edge core, we do not use the corresponding central points~$v$ or~$v'$ of a wiring, we need at least $(d-1)^2$ flips due to Proposition~\ref{prop_no_outer}.
Now suppose that we want to transform an edge core~$D$ using a point~$v$.
Then we need to flip all zig-zag edges in the wiring to $v$ (as in the proof of Lemma~\ref{lem_flip_from_cover}), taking $2w-1$ flips.
Note that this is optimal since only one of the zig-zag edges can be removed with each flip.
The values of $d$ and~$w$ have been chosen in a way that flipping the edges of all wirings, crossings, and edge cores to the corresponding central point and back, as described, uses fewer flips than transforming one edge core, due to the bound of Lemma~\ref{lem_flip_from_cover}.
For any algorithm, this means that flipping all edges at wirings and crossings twice and transforming the edge cores with a point at the wiring is cheaper than transforming one edge core without a point at a wiring.
Due to Proposition~\ref{prop_no_outer} we know that we need a point at a wiring for each edge core to be transformed in fewer than $(d-1)^2$ flips, as, for each edge core, the points at the two wirings are the only ones inside the hourglass of the edge core.
Therefore, we know that the (optimal) flip distance $\delta_\mathrm{opt}$ is given by
\begin{equation}\label{eqn_delta_opt}
\delta_\mathrm{opt} = k_\opt(4w - 2) + R \text{ for some } R > 0 \enspace .
\end{equation}

Equation~(\ref{eqn_delta_opt}) shows how to deduce~$k_\opt$ from~$\delta_\mathrm{opt}$:
Lemma~\ref{lem_flip_from_cover} gives us an upper bound on the flip distance, and hence $R \leq 2(m(4x + 2d) + \tau)$.
Note that if $R < 4w-2$, the size of the minimum vertex cover can be calculated from the flip distance by
\begin{equation*}%
k_\opt = \left \lfloor \frac{\delta_\mathrm{opt}}{4w-2} \right \rfloor \enspace .
\end{equation*}
We actually require $cR < 4w-2$, for a given constant~$c > 1$ (which is used for the reasoning about approximation ratios later in this section).
This requirement can be fulfilled by choosing~$w$ under consideration of the bound $R \leq 2(m(4x + 2d) + \tau)$, i.e., such that $2(m(4x + 2d) + \tau) < (4w-2)/c$.
Thus, we have chosen $w$ such that, in an optimal flip sequence, flipping the zig-zag edges of one wiring needs more flips than $c$ times the number of all flips of edges not in a wiring.

No matter how well an algorithm performs, it has to flip the zig-zag edges of at least $k_\opt$ wirings when using less than $(d-1)^2$ flips, and Lemma~\ref{lem_flip_from_cover} tells us that $c$ times the number of flips of the edges not in a wiring are in total fewer than the number of the zig-zag edges flipped for one wiring when the algorithm is optimal.
\end{proof}

To show APX-hardness of the flip distance problem, we show that we have an AP-reduction~\cite[pp.~256--261]{apx_book} from \textsc{Minimum Vertex Cover} using the previous lemmata.
Let $k_\opt$ be the size of a minimum vertex cover for $G$ and $\delta_\opt$ be the flip distance between $T_1$ and $T_2$.
The \emph{performance ratio} of an approximate solution to a minimization problem is the value of the measure function applied to the approximation divided by the optimal value, e.g., $k/k_\opt$ for an approximate vertex cover of size~$k$.
See~\cite[pp.~257--258]{apx_book} for the following definition (note that~$r$ is a bound on the performance ratio of the approximate solution of the problem we reduce to, and that~$\alpha$ is a factor in the bound for the performance ratio of the solution to the initial problem).

\begin{definition}[AP-reduction]\label{def_ap_reduction}
Let $P_1$ and $P_2$ be two NP optimization problems.
$P_1$ is \emph{AP-reducible} to $P_2$ if two functions $f$ and $g$ and a constant $\alpha \geq 1$ exist such that:
\begin{enumerate}
 \item\label{item_is_instance} For any instance $X$ of $P_1$ and any rational $r > 1$, $f(X,r)$ is an instance of $P_2$.
 \item\label{item_has_solution} For any instance $X$ of $P_1$ and any rational $r > 1$, if there is a feasible solution of $X$, then there is a feasible solution of $f(X,r)$.
 \item\label{item_has_original_solution} For any instance $X$ of $P_1$ and any rational $r > 1$, and for any $Y$ that is a feasible solution of $f(X,r)$, $g(X,Y,r)$ is a feasible solution of $X$.
 \item\label{item_polynomial_time} $f$ and $g$ are computable by two algorithms whose running time is polynomial for any fixed rational $r$.
 \item\label{item_ratio_implied} For any instance $X$ of $P_1$ and any rational $r > 1$, and any feasible solution $Y$ for $f(X,r)$, a performance ratio of at most $r$ for $Y$ implies a performance ratio of at most $1 + \alpha (r-1)$ for $g(X,Y,r)$.
\end{enumerate}
\end{definition}
In our case, $f$ corresponds to the construction of the point set and the two triangulations.
Requirements \ref{item_is_instance} and~\ref{item_has_solution} follow from our construction.
A vertex cover can be extracted from a flip sequence~$Y$ from the zig-zag edges flipped at the wirings;
this corresponds to $g$, and requirement~\ref{item_has_original_solution} is therefore fulfilled.
Both $f$ and $g$ are polynomial-time algorithms, as demanded by requirement~\ref{item_polynomial_time} (the parameter~$r$ is actually not used by either of these two algorithms, but will be used in the analysis).

Intuitively, Lemmata~\ref{lem_flip_from_cover} and~\ref{lem_cover_from_flip} give evidence that the reduction described so far fulfills also requirement~\ref{item_ratio_implied} of Definition~\ref{def_ap_reduction}.
However, because of the remainder term~$R$, the performance ratio of an approximation of the flip distance does not directly give the performance ratio of the resulting approximate vertex cover;
we have to show that~$R$ was chosen small enough and therefore the performance ratio of the approximate vertex cover stays within the bounds required by Definition~\ref{def_ap_reduction}.
Let $\delta$ be an approximate solution for the flip distance such that $\delta \leq \delta_\opt r$.
Further, let $R'$ be the remainder produced by the floor function in~(\ref{eqn_delta_to_k}) of Lemma~\ref{lem_cover_from_flip}, that is, in the expression $k = \left \lfloor \frac{\delta}{4w-2} \right \rfloor$.
By Lemma~\ref{lem_cover_from_flip}, we get
\[
  k \leq \frac{\delta - R'}{4w-2} \leq \frac{\delta_\opt r - R'}{4w-2} \enspace .
\]
Let $R$ be the remainder term for the optimal solution $\delta_\opt$ as in~(\ref{eqn_delta_opt_to_k_opt}) of Lemma~\ref{lem_cover_from_flip}, that is, in the expression $k_\opt = \frac{\delta_\opt - R}{4w-2}$.
Then introducing the term $rR - rR$ in the numerator of the previous upper bound for~$k$ yields
\begin{equation}\label{eqn_k_ratio}
 k \leq r\frac{\delta_\opt - R}{4w-2} + \frac{rR - R'}{4w-2} = r k_\opt + \frac{rR - R'}{4w-2} \leq r k_\opt + \frac{rR }{4w-2} < r k_\opt + \frac{r}{c} \enspace ,
\end{equation}
where the equality and the last inequality are due to Lemma~\ref{lem_cover_from_flip}.
Let $\alpha = 4$ and $c=2$.
Suppose first that $r-1 = \epsilon \geq \frac{1}{2k_\opt + 1}$.
Then
\begin{equation}\label{eqn_large_approx}
 r k_\opt + \frac{r}{2} = k_\opt + \epsilon k_\opt + \frac{1}{2} + \frac{\epsilon}{2} = k_\opt + \alpha \epsilon k_\opt + \frac{1}{2} - \epsilon \left (3k_\opt - \frac{1}{2} \right ) \enspace . \enspace
\end{equation}
To get rid of the last part we use
\[
 \epsilon \left (3k_\opt - \frac{1}{2} \right) \geq \frac{3k_\opt-1/2}{2k_\opt+1} > \frac{1}{2} \enspace ,
\]
which, by (\ref{eqn_k_ratio}) and (\ref{eqn_large_approx}), implies
\[
 k \leq k_\opt + \alpha \epsilon k_\opt \enspace .
\]
On the other hand, suppose that $r-1 = \epsilon < \frac{1}{2k_\opt +1}$.
Then from (\ref{eqn_k_ratio}), we get
\begin{gather*}
 k < r k_\opt + \frac{r}{2} = k_\opt + \epsilon k_\opt + \frac{1}{2} + \frac{\epsilon}{2} = k_\opt + \epsilon\left (k_\opt + \frac{1}{2} \right ) + \frac{1}{2}\\
 < k_\opt + \frac{k_\opt + 1/2}{2 k_\opt + 1} + \frac{1}{2} = k_\opt + 1 \enspace .
\end{gather*}
Since the solutions to vertex cover are integers, this implies that $k = k_\opt$ and therefore $ k \leq k_\opt + \alpha \epsilon k_\opt$ holds.
Hence, in both cases $k / k_\opt \leq 1 + \alpha(r-1)$ and our reduction fulfills all properties of an AP-reduction from \textsc{Minimum Vertex Cover}.

\begin{theorem}
The problem of determining a shortest flip sequence between two triangulations of a point set is APX-hard.
\end{theorem}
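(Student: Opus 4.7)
My plan is to package the development of Section~\ref{sec_reduction} as an AP-reduction from \textsc{Cubic Minimum Vertex Cover}, which is APX-complete and hence suffices for APX-hardness. The reduction function $f$ sends an instance $G$ (ignoring the rational parameter $r$) to the pair of triangulations $T_1,T_2$ built on the point set composed of edge centers, wirings, and crossing gadgets. The inverse function $g$ takes a flip sequence $y$ from $T_1$ to $T_2$ and returns the vertex set $C \subseteq V$ consisting of those vertices whose wiring is touched by at least one flip during $y$. By the reasoning behind Lemma~\ref{lem_cover_from_flip}, transforming any edge center in fewer than $(d-1)^2$ flips requires the kernel point supplied by an incident wiring, so $C$ is a vertex cover of $G$ of size at most $\lfloor \delta/(4w-2)\rfloor$.

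Next I would verify the five requirements of Definition~\ref{def_ap_reduction} in turn. Items~1, 2, and~4 are immediate: $f$ runs in polynomial time by the coordinate construction of Appendix~\ref{apx_coordinates}, the flip graph is connected so a feasible flip sequence always exists, and $g$ is a polynomial-time extraction of wiring indices from~$y$. Item~3, the feasibility of $g$, follows from the previous paragraph. The substance is item~5, for which I would take $\alpha = 4$.

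For item~5, the bound $k \leq k_\opt[1 + \alpha(r-1)]$ has effectively been assembled already: Lemma~\ref{lem_cover_from_flip} gives $k \leq (\delta - R')/(4w-2)$, which combined with $\delta \leq r\delta_\opt$ and Corollary~\ref{cor_gap} yields equation~(\ref{eqn_k_ratio}), namely $k < r k_\opt + r/2$. The main obstacle is absorbing the additive slack $r/2$, which would block a naive division by $k_\opt$. To handle it I would case-split on $\epsilon := r-1$ as in equation~(\ref{eqn_large_approx}): when $\epsilon \geq 1/(2k_\opt + 1)$, the calibration $w > m(4x+2d)+\tau+1/2$ makes $\epsilon(3k_\opt - 1/2) > 1/2$, which cancels the additive $1/2 + \epsilon/2$; when $\epsilon < 1/(2k_\opt+1)$, the same inequality yields $k < k_\opt + 1$, collapsing to $k = k_\opt$ by integrality of vertex cover sizes. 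Both cases give $k \leq k_\opt[1 + \alpha(r-1)]$, so the AP-reduction is complete and the APX-hardness of \textsc{Cubic Minimum Vertex Cover} transfers to the flip distance problem.
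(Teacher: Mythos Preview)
Your proposal is essentially the paper's own argument: package the construction as an AP-reduction from \textsc{Cubic Minimum Vertex Cover}, verify items~1--4 of Definition~\ref{def_ap_reduction} directly, and handle item~5 with $\alpha=4$ via the case split on $\epsilon = r-1$ around the threshold $1/(2k_\opt+1)$, exactly as in equations~(\ref{eqn_k_ratio})--(\ref{eqn_large_approx}).

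Two small imprecisions are worth tightening. First, your $g$ returns the set of vertices whose wiring is \emph{touched} by some flip; this set is indeed a vertex cover (it contains every wiring whose center point was actually used on an edge center), but its size need not satisfy the bound $\lfloor \delta/(4w-2)\rfloor$, since a wiring can be touched and restored at a cost of only two flips. The fix is to let $g$ return the set of wirings that are \emph{fully opened} (all $2w-1$ zig-zag edges removed at some moment of the sequence), falling back to $V$ when that set fails to cover; Lemma~\ref{lem_cover_from_flip} guarantees it does cover whenever $\delta < (d-1)^2$, and each such wiring genuinely contributes $4w-2$ flips. Second, the inequality $\epsilon(3k_\opt - 1/2) > 1/2$ in the large-$\epsilon$ case is not a consequence of the calibration of $w$; it is straight arithmetic from $\epsilon \geq 1/(2k_\opt+1)$ and $k_\opt \geq 1$. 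The choice of $w$ enters earlier, via Corollary~\ref{cor_gap}, to force $R < 2w-1$ and hence the last inequality in~(\ref{eqn_k_ratio}).
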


\subsection{An Improved Bound on the Performance Ratio}
\label{sec_improved_bound}
The previous reduction did not use the performance ratio bound~$r$.
As pointed out by an anonymous referee, a different choice of $w$ actually allows to prove a better lower bound on the tractable performance ratios.
This reduction selects $c$ (the constant used in Lemma~\ref{lem_cover_from_flip}) according to~$r$ (recall that $r$ is considered a constant).
Hence, this is an example of a reduction that actually uses the bound~$r$ as a parameter.
It is known that approximating \textsc{Minimum Vertex Cover} by any constant factor less than $10\sqrt{5} - 21 \approx 1.36$ is NP-hard~\cite{dinur}, and, if the Unique Games Conjecture is true, even obtaining a performance ratio within any constant less than 2 is NP-hard~\cite{khot}.
However, there exist approximation algorithms achieving a ratio of $2 - o(1)$~\cite{hochbaum,karakostas}.

Let $b$ be the bound for the performance ratio that a polynomial-time algorithm can guarantee for \textsc{Minimum Vertex Cover} (note that $b$ is between $1.36$ and $2$, unless $\textsc{P}=\textsc{NP}$).
Suppose we can approximate the flip distance by a performance ratio less than $b - \varepsilon$ for some constant $\varepsilon$.
Due to (\ref{eqn_k_ratio}), we can guarantee a performance ratio of at most $(b-\varepsilon) + \frac{b-\varepsilon}{k_\opt c}$ for \textsc{Minimum Vertex Cover}.
Hence, if $\varepsilon > \frac{b-\varepsilon}{k_\opt c} + \varepsilon'$, then the performance ratio bound for \textsc{Minimum Vertex Cover} is better than $b-\varepsilon'$.
This is fulfilled for $c > \frac{b-\varepsilon}{\varepsilon - \varepsilon'}$.
In particular, this requires $\varepsilon = \kappa \varepsilon'$ for a constant $\kappa > 1$.
Note, however, that $\kappa$ cannot be 1.
The reason for this is that, in~(\ref{eqn_k_ratio}), $R'$ can be smaller than $R$.
For example, there may exist a 2-approximation for \textsc{Minimum Vertex Cover} for which the corresponding flip sequence is less than twice the optimum.
Still, we obtain the following result.

\begin{theorem}
For any given constant $\varepsilon > 0$, it is NP-hard to approximate the flip distance between two triangulations by a factor less than $10\sqrt{5} - 21 - \varepsilon$, and, if the Unique Games Conjecture is true, by a factor less than $2 - \varepsilon$.
\end{theorem}

\section{Conclusion}
In this paper, we showed that it is APX-hard to minimize the number of flips to transform two triangulations $T_1$ and $T_2$ of a point set $S$ into each other.
As a by-product, Corollary~\ref{cor_distance} revealed an interesting aspect on distances in the flip graph.

We are not aware of any constant-factor approximation of the flip distance.
For the upper bound given by Hanke et al.~\cite{edge_flipping_distance}, it is easy to construct examples (like the one in \figurename~\ref{fig_dc_steiner}) where the bound is quadratic while the flip distance is linear.

Given the recent NP-completeness result for simple polygons~\cite{poly_hard}, the main remaining open problem is the one for triangulations of convex point sets and its dual problem, the computation of the binary tree rotation distance~\cite{sleator}.

\ifarxiv
\paragraph{Acknowledgements.}
\else
\paragraph{Acknowledgements}
\fi
The author wants to express his gratitude to Oswin Aichholzer, Thomas Hackl, and Pedro Ramos, as well as anonymous referees for valuable suggestions on improving the presentation of the result.
In particular, one anonymous referee pointed out that a slight generalization of the reduction actually implies the result discussed in Section~\ref{sec_improved_bound}.
\appendix

\newcommand{\Q}{\ensuremath{\mathbb{Q}}}

\section{Calculation of the Coordinates}\label{apx_coordinates}
Section~\ref{sec_reduction} already contained a description of the gadgets we used in our reduction.
However, the validity of gadget-based reductions when proving NP- or APX-hardness for problems on point sets requires that the coordinates of the points used can be calculated in polynomial time.

The reader may have noticed that our high-level construction involves points placed at the crossing of circular arcs, which, in general, leads to irrational coordinates, even if the circular arcs are defined by rational points.
We will give a construction that slightly varies from the one described that uses only rational coordinates, with both the numerator and denominator bounded by a polynomial in the input size.

One way to strengthen the result is to show that the problem remains APX-hard for triangulations of point sets in general position.
The gadgets in our reduction do not make use of collinear points.
However, we did not explicitly mention how to avoid three points on a line when describing the construction.
In this appendix we give an explicit construction of the point set in general position, i.e., that no three points are collinear.

Note that the construction may not be ``economical'' in the sense that the construction may be possible with coordinates having a smaller binary representation.
We will always prefer constructions that are easy to prove.
We will place the points on and close to the unit disc (meaning that a coordinate will never exceed $1+\epsilon$, for some small $\epsilon > 0$);
therefore, we can specify the size of a coordinate in terms of the size of its denominator.

\subsection{Placing the Points of the Convex Polygon}

As a first step, we give a simple construction of a convex $n$-gon for placing the central points of the wiring gadgets with all vertex coordinates being rational and the denominators being in $O(n^{10})$.
Further, we want to assure that no three diagonals cross in the same point.
For doing so, we will first choose $n^5$ \emph{candidate points} on the unit circle and then select $n$ points out of them.

Rational points on the unit circle are known to be given by $\left (\frac{1-t^2}{t^2+1}, \frac{2t}{t^2+1} \right )$ with $t \in \Q$, see, e.g., \cite{canny}.
We define a sequence $K$ of candidate points with $t = i/n^5$ for the integers $1 \leq i \leq n^5$.
(For consistency with later parts and ease of presentation therein we choose the candidate points from the upper-right quadrant in counterclockwise ascending order; hence, the value of $t$ is between $0$ and $1$.)
Now we select $n$ points out of $K$ such that there are no three diagonals that cross at a single point.
We choose the first five points of our final set from the candidate points.
Suppose we have chosen $j \geq 5$ points such that no three diagonals cross at a single point.
We have $n^5 - j$ points in $K$ to choose the next point from.
Consider all~$\binom{j}{5}$ combinations of five points among the already chosen ones.
Each combination gives exactly five points on the unit circle that cannot be chosen, and none of these is among the $j$ already chosen candidate points.
Hence, we have $5 \binom{j}{5} + j$ ``forbidden'' points (which may not all be among the candidate points).
See \figurename~\ref{fig_app_vertices_on_circle}.
We have, however, $n^5 \geq j^5 > 5\binom{j}{5} + j$ candidate points to choose from, and therefore we for sure can choose point number $(j+1)$.
We denote this set of points by~$P_V$; the elements of $P_V$ are the points representing the vertices of the input graph.

\begin{figure}
\centering
\includegraphics{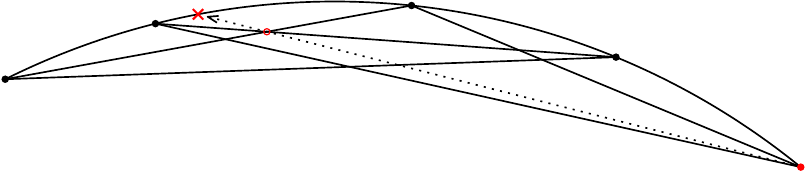}
\caption{Five points on the unit circle; a point at the (red) cross would introduce a supporting line through a crossing of two other supporting lines and is therefore forbidden.
The image is rotated for representational reasons, our method chooses all points from the upper-right quadrant.
}
\label{fig_app_vertices_on_circle}
\end{figure}

\begin{proposition}
A point set of $n$ points in convex position with all coordinates rational having their denominators in $O(n^{10})$ and no three diagonals crossing at the same point can be found in polynomial time.
\end{proposition}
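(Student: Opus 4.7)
The plan is to carry out the greedy construction already outlined in the paragraphs preceding the proposition, and then verify the three quantitative claims it makes: rationality, polynomial denominator bound, and feasibility of the selection step.

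First I would verify the coordinates. Taking $t = i/n^5$ for integer $1 \leq i \leq n^5$ in the standard rational parametrization $(1-t^2)/(1+t^2), 2t/(1+t^2))$ of the unit circle yields the candidate point
\[
\left(\frac{n^{10}-i^2}{n^{10}+i^2},\; \frac{2in^5}{n^{10}+i^2}\right),
\]
whose common denominator $n^{10}+i^2$ is bounded by $2n^{10}$. The candidates are pairwise distinct (the parametrization is injective on $t \in (0,1)$) and all lie on the unit circle, hence in convex position. This establishes the bookkeeping part of the claim.

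Next I would carry out the greedy step. Suppose $j \geq 5$ candidates have already been chosen so that no three chords among them are concurrent. For a new candidate $p$ to create three concurrent chords, there must exist five previously chosen points that, together with $p$, split into three pairs whose chords meet at a common point. I would bound the number of forbidden positions on the circle as follows: for each $5$-subset $Q$ of chosen points and each pairing of $Q \cup \{p\}$ into three pairs with $p$ paired to some $q \in Q$, the remaining four points determine a unique intersection $x$ of two chords, and then $p$ is forced onto the second intersection of the line $qx$ with the unit circle. This produces at most a constant number (at most $15$) of forbidden points per $5$-subset, so in total at most $15\binom{j}{5}+j$ points on the circle are forbidden (the extra $j$ accounts for not re-choosing an already chosen candidate). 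The exact constant is irrelevant as long as it is independent of $j$; the important fact is the bound $O(j^5)$.

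Finally I would compare counts: for every $j$ in the range $5 \leq j \leq n-1$ we have $15\binom{j}{5}+j < n^5$, so among the $n^5$ candidates at least one is unused and not forbidden, and the greedy step succeeds. Iterating yields the required $n$-point set $P_V$; testing whether a candidate is forbidden reduces to computing intersections of rational lines with the circle and comparing rationals, which is polynomial time per step, hence polynomial overall. The only subtlety I would need to be careful about is the enumeration of pairings in the forbidden-set bound; once the correct constant in front of $\binom{j}{5}$ is fixed, the inequality $15\binom{j}{5}+j < n^5$ for $j \leq n-1$ is routine and the proposition follows.
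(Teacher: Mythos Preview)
Your proposal is correct and follows essentially the same greedy construction as the paper. The only difference is cosmetic: the paper observes that for each $5$-subset only one of the three pairings of the remaining four points yields chords that actually cross inside the circle, so it counts exactly $5$ forbidden circle points per $5$-subset rather than your conservative $15$; as you note, the constant is immaterial since $15\binom{j}{5}+j < n^5$ still holds for all $j\le n-1$.
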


Note that the facts that no three diagonals of the resulting $n$-gon cross and that the coordinates are bounded also give us a lower bound on the distance between intersection points and other diagonals, which we will use in the next part.

\subsection{A Sufficiently Small Value}
In this section, we will define four values $\delta_\mathrm{e}, \delta_\mathrm{v}, \delta_\mathrm{n},$ and $\delta_\mathrm{r}$ that will give sufficiently small upper bounds on the construction of the gadgets.
For any point~$p$, let $x_p$ and $y_p$ denote its $x$- and $y$-coordinate, respectively.

For the definition of $\delta_\mathrm{e}$, find the minimum squared distance from each of the $\binom{n}{4}$ crossings of the diagonals of the $n$-gon to the diagonals not involved in the corresponding crossing.
Let the actual distance be~$\delta_\mathrm{e}$.
Since the squared distance~$\delta_\mathrm{e}^2$, is given by $(x_a - x_b)^2 + (y_a - y_b)^2$ between two points $a$ and $b$, we can set $\delta_\mathrm{e}' = |x_a - x_b|$ to obtain a ``small'', rational and positive distance $\delta_\mathrm{e}' \leq \delta_\mathrm{e}$ (at least one of the horizontal or vertical distances is non-zero, in particular, up to here no two points can have the same $x$- or $y$-coordinate).
When we construct the tunnels that are formed around an edge of the drawing of the input graph, we can choose, say, $\delta_\mathrm{e}'/3$ as an upper bound for the distance between the edge and the edges defining the tunnel.
Then the intersection of any three tunnels is always empty.
(Our actual tunnels will be even narrower.)

The vertex gadgets used ``small'' circles around each point in $P_V$.
Let $u, v, w$ be a triplet of consecutive vertices on the $n$-gon defined by $P_V$.
Let $\delta_\mathrm{v}^2$ denote the smallest squared distance between $v$ and the line through $u$ and $w$ for every choice of the triplet.
As with the tunnels, we can choose a rational $\delta_\mathrm{v}' \leq \delta_\mathrm{v}$ by choosing only the horizontal or vertical distance between $v$ and the closest point on the supporting line of $u$ and $w$.

Again, let $v$ be a vertex on the $n$-gon.
Let $\ell_{v}$ be the line through~$v$ that is perpendicular to the line~$ov$, where~$o$ is the origin.
Consider the distances from $u$ and $w$ to $\ell_{v}$.
Let $\delta_\mathrm{n}^2$ be the smallest squared distance for all choices of $v$ (and corresponding $u$ and $w$), and choose a rational $\delta_\mathrm{n}' \leq \delta_\mathrm{n}$ as before.
Further, let $\delta_\mathrm{r}$ be the smallest horizontal or vertical distance between two points in $P_V$ (which is non-zero by construction).
See \figurename~\ref{fig_app_something_small}.
We define $\delta = \min\{\delta_\mathrm{e}'/3,\delta_\mathrm{v}',\delta_\mathrm{n}',\delta_\mathrm{r}\}$.
If we now choose the radius of the cycle centered at each vertex by $r_V = \delta/6$, then no two circles intersect (there is actually a distance of at least $4r_V$ between two circles), and each circle only intersects the edges of the input graph that are incident to the vertex it is centered at.
Further, no circle intersects the convex hull of two other circles.

\begin{figure}
\centering
\includegraphics{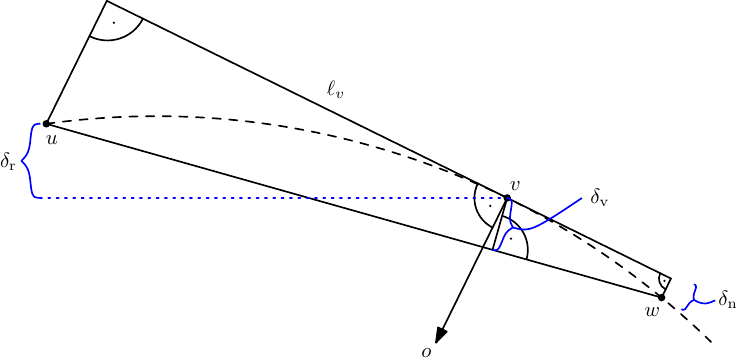}
\caption{Construction to obtain bounds for $\delta$.}
\label{fig_app_something_small}
\end{figure}

\subsection{Tunnel Construction}
For each edge $e$ of the input graph connecting two vertices $v$ and $w$, we now give the construction of the tunnels.
Let $C_v$ and $C_w$ be the circles around $v$ and $w$, respectively.
The tunnel for the edge between $v$ and $w$ is given by two segments, each having one endpoint on $C_v$ and one endpoint on~$C_w$.
We want to get rational points on $C_v$ and $C_w$.
Since these circles are not only defined by a rational center point, but also have a rational radius, the problem boils down to finding a rational point on the unit circle, or, equivalently, a (possibly irrational) angle $\alpha$ such that $\sin(\alpha)$ and $\cos(\alpha)$ are rational, within some interval given by quadratic irrationals.
Sines with this property are called \emph{rational sines}, and correspond with the parametrization of the unit circle that we already used before.
Canny, Donald and Ressler~\cite{canny} give an algorithm for finding a rational sine for a parameter $t = p/q$ such that $|p/q - x| < \epsilon$, for given $x$ and~$\epsilon$ (we will use an extended method for non-rational radii later).
Their algorithm gives a denominator~$q$ in $O(1/\epsilon)$, and the running time is polynomial in $q$.
However, the input $x$ is an approximation as well, and their goal is to get rational sines with small binary representation.
Our angle intervals, however, are given by rational points and their relative position to the circle center.
For finding a point within this interval, the Farey approximation as used by Canny et al.~\cite{canny} for $t = p/q$ is sufficient and easy to apply for our setting, as we do not need an explicit approximation of the angle and the interval as input (this algorithm searches a point inside the interval in the fashion of binary search, computing the mediant~$\frac{a+c}{b+d}$ of two rational values~$\frac{a}{b}$ and $\frac{c}{d}$ in each step).
We, however, need an upper bound on the denominator $q$ derived from the points defining the angle.

Now we show how to use the results by Canny et al.~\cite{canny} for our needs.
Consider the unit circle and two points $a$ and $b$.
Let $\angle a$ and $\angle b$ be the polar angles of these points, and, without loss of generality, let $\angle a < \angle b$.
We describe only the case where both angles are within $[0, \dots, \pi/2]$, the other cases are similar (and can easily be distinguished);
in our setting we simply have to rotate the plane orthogonally.
To approximate an angle between $\angle a$ and $\angle b$ using a rational number $t$, we reason about the (possibly irrational) values $t_a$ and~$t_b$.
For $t_a$ and $\angle a$ we define
\[ \sin(\angle a) = \frac{2t_a}{t_a^2 + 1} \enspace ,\]
which, when choosing the appropriate root, gives
\[ t_a = \frac{1}{\sin(\angle a)} - \sqrt{\frac{1}{\sin^2(\angle a)} - 1} \enspace .\]
The sine of $\angle a$ is given by $a_y/\sqrt{a_x^2 + a_y^2}$.
The values of $\angle b$ and $t_b$ are defined analogously.
We therefore need to find a rational number $t$ with $t_a \leq t \leq t_b$.
The Cauchy bound (see~\cite{yap}) for an algebraic number $g$ being the root of a polynomial $\sum_{i=0}^m c_i x^i$ with rational coefficients $c_i$ is given by
\[ |g| \geq \frac{|c_0|}{|c_0| + \max\{ |c_1|,\dots,|c_m| \}} \enspace .\]
The difference $|t_a - t_b|$ is therefore bounded from below by a rational that has a denominator polynomial in the problem size.
{Using Farey approximation, we can find a rational $t$ whose denominator exceeds the denominator of the bound only by a polynomial factor.}

Since we can choose rational points on the unit circle inside an interval (and therefore on instances of the unit circle that are translated and scaled by rational values), we now have the tools to choose the endpoints of the tunnels.
For two vertices $v$ and $w$, let these be called $p_v$ and $q_v$ (placed on~$C_v$), as well as $p_w$ and $q_w$ (placed on~$C_w$).
Hence, a tunnel between $v$ and $w$ consists of the quadrilateral $p_v q_v q_w p_w$.
In order to prevent collinear triples of points, we again select a set of candidate points on $C_v$ and $C_w$ and choose the four points among them.
Note that this results in tunnels that may not be exactly rectangular, but this is irrelevant for our final construction.
See \figurename~\ref{fig_app_tunnel_endpoints} for an accompanying illustration.

We place the points in the following way.
Without loss of generality, suppose that %
$x_v < x_w$.
To obtain the set of candidate points for~$p_v$, consider the segment between $w$ and the point $(x_w, y_w + r_V)$, where $r_V$ is the radius of the circles, which we call the \emph{upper spoke} of $w$.
Let the \emph{lower spoke} of $w$ be defined analogously between $v$ and the point $(x_w, y_w\!-\!r_V)$.
Find the two parameters $t_1$ and $t_2$ for rational points $p_{t_1}$ and $p_{t_2}$ on $C_v$ such that the line through $v$ and $p_{t_1}$ intersects the upper spoke of $w$ at a point above $(x_w, y_w + 7 r_V/8)$ and the line through $v$ and $p_{t_2}$ intersects the upper spoke between $(x_w, y_w + r_V/2)$ and $(x_w, y_w + 5 r_V/8)$.
We can now select our set $K_v$ of candidate points from the interval $[t_1, t_2]$.
The same can be done for two parameters $t_3$ and $t_4$, with the roles of $v$ and $w$ interchanged.
We select a point $p_v \in K_v$ and a point $p_w \in K_w$ as the endpoints of one side of the tunnel gadget between $v$ and $w$.

Let us now argue the correctness of this construction.
Note that we do not need to require the sides to be parallel to the supporting line of $v$ and $w$ (we could do so by increasing the number of candidate points).
The crucial property of the points we need is that $p_v v w p_w$ forms a convex quadrilateral and we therefore have to prove that $p_v$ is always left of the directed line through $v$ and $p_w$ (and, analogously, that $p_w$ is right of the directed line through $w$ and $p_v$).
Let $l_w = (x_w, y_w + r_V/2)$ and $u_v = (x_v, y_v + r_V)$.
The diagonals $v l_w$ and $u_v w$ of the trapezoid $u_v v w l_w$ intersect each other at a ratio of $(r_V/2)/ r_V$, i.e., at two thirds of the interval $[x_v, x_w]$.
Recall that the radius $r_V$ was chosen in a way that the disc centers have a horizontal distance of at least $6r_V$.
Hence, the segments intersect outside~$C_w$;
the topmost candidate point on $C_w$ is below the line through $v$ and the lowest candidate point on $C_v$, and vice versa.
Note that since the candidate points on $C_v$ are chosen inside the convex hull of $C_w$ and $v$, no two tunnels from $v$ can intersect.

\begin{figure}
\centering
\includegraphics{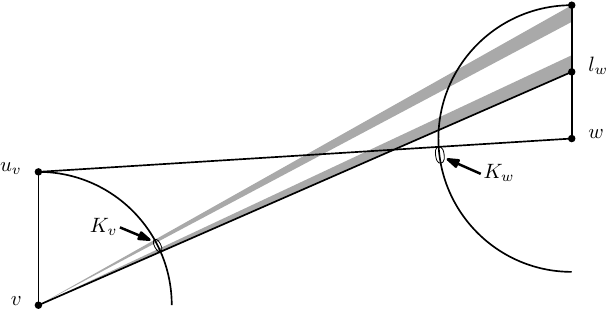}
\caption{Construction for the tunnel endpoints: The two extremal candidate points for $p_v$ are chosen inside the two gray wedges.
Note that $v$ and $w$ in the drawing do not fulfill the required vertical distance since the drawing would get too small.
}
\label{fig_app_tunnel_endpoints}
\end{figure}

It remains to find the correct number of candidate points.
Suppose we already constructed all but one tunnel point.
Since at every circle there are at most $2(n-1)$ tunnel points there are at most $\binom{n(2n-1)}{2}$ lines on which we are not allowed to place a point.
Every line intersects the circle on which we place the last point at most twice.
Hence, if we choose more than twice the number of points as we have lines, we can always choose a point such that the resulting point set is in general position.

\subsection{Points in the Tunnels}
For each tunnel, we construct two circular arcs, one for each segment defining the tunnel, on which we place the points of the edge core.
The crucial property of such an arc is that for two wire centers $v$ and $w$, these points are the only ones in the flip-kernel.
Let $q_v$ and $q_w$ be the two endpoints of a tunnel edge, s.t.\ $q_w$ is to the right of $q_v$ and the interior of the tunnel is above the line $q_v q_w$.
See \figurename~\ref{fig_app_center_points}.
The constructed arc will start at~$q_w$ and end at~$q_v$.
We consider four rays, namely the ones that leave $q_v$ to the right in an angle of $0, (-\pi/4)$, and $\pi/4$ with the $x$-axis and the upward vertical ray at $q_v$.
Let $r$ be the one that opens the smallest positive angle $\alpha$ with $q_v q_w$.
If the angle between $q_v$ and $q_w$ and $q_v w$ is smaller than $\alpha$, then let $s$ be the ray through $w$ starting at $q_v$; otherwise, let $s=r$.
We perform the analogous operation (i.e., with the plane being mirrored horizontally) at $q_w$, obtaining a ray~$s'$.
Without loss of generality, let the angle between $q_v q_w$ and $s$ be smaller than or equal to the one between $q_w q_v$ and $s'$.
Construct the circle $A$ that passes through both $q_v$ and~$q_w$ such that~$A$ is tangent to the supporting line of~$s$.
The coordinates of the center of $A$ are still rational.
It is well-known that, when given any rational point~$p$ on $A$ and a line $\ell$ with rational slope that intersects~$A$ at $p$ and a second point~$p'$, the point~$p'$ is rational as well, see, e.g.,~\cite[p.~5]{husemoeller}.
Hence, we need to appropriately choose lines through a point~$p$.
The crossings of the segments that define all the tunnels identify the region where the edge core should be placed.
Let $R$ be the region we have to place the points in (marked gray in \figurename~\ref{fig_app_center_points}).
By the choice of $r$, we constructed~$A$ in a way that we can mirror and rotate the plane orthogonally such that the intersection of $A$ and $R$ is within an angle of $0$ and $\pi/4$ from~$q_v$.
This means that any line $\ell$ through $q_v$ and this intersection will have a slope~$t$ between 0 and 1.
This reasoning is similar to the one of Burnikel~\cite{burnikel} to adapt the techniques of~\cite{canny} for such rational circles (i.e., circles given by three rational points).
As before, we can use, e.g., Farey approximation for the slope $t$ of $\ell$.
At each iteration, we check whether the second intersection of $\ell$ with $A$ is inside the quadrilateral $R$, and, if not, on which side it is.
Since the denominators of the coordinates of the points defining $A$ and $R$ are polynomial, there is a polynomial lower bound on the difference between the (possibly non-rational) parameters for the two points where $A$ enters and leaves $R$ (as for the construction of the tunnel endpoints).
Hence, after a polynomial number of steps, we have a rational slope for $\ell$ such that $\ell$ passes through $A$ inside $R$; therefore, also this intersection point has rational coordinates and its denominator is polynomial in the problem size.
To obtain a second such point, the process can be continued.
Now we have two points in the intersection of $A$ and $R$ which define two slopes of lines through $q_v$.
Any line through $q_v$ with a slope in the interval between these two slopes gives a rational point on $A \cap R$.
Hence, we can choose our candidate points by dividing that interval.

\begin{figure}
\centering
\includegraphics{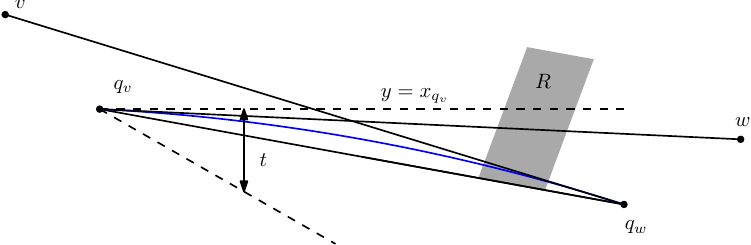}
\caption{We want to choose rational points on the (blue) arc inside the gray region by Farey approximation on the slope $t$.
Note that the gray region actually is, for presentational reasons, drawn too close to $w$.}
\label{fig_app_center_points}
\end{figure}

The remaining problem is the one of choosing the points for the crossing gadgets.
Two arcs in crossing tunnels will, in general, cross at a point that does not have rational coordinates.
The crucial property of the points of the crossing gadgets, however, is that they are outside the hourglasses of the edge cores (recall Definition~\ref{def_dc}) and that the edges between them can ``quickly'' be flipped to the center of the corresponding wiring gadget.
Placing the points for the crossing gadgets at the crossings of the segments that define the tunnels would satisfy these constraints, but would lead to collinear triples.
So we have to slightly perturb each point $p$ to obtain a point~$p'$ without loosing these properties.
See \figurename~\ref{fig_app_tunnel_crossings}.
Between every consecutive pair of crossing points on a tunnel segment $q_v q_w$ we can choose the rational midpoint.
If the perturbed point~$p'$ remains on the same side of the line through the wire center and the midpoint as $p$, the order around the wire center is maintained.
Further, the perturbed points have to remain on the same sides of the lines that define the hourglasses of the edge cores involved.
Together with the tunnel edges, these constraints give a convex region from which we can choose our perturbed point.
We may again place a circular arc inside this region (marked gray in \figurename~\ref{fig_app_tunnel_crossings}) on which we select a sufficiently large number of candidate points, analogously to the construction of the other gadgets.

\begin{figure}
\centering
\includegraphics{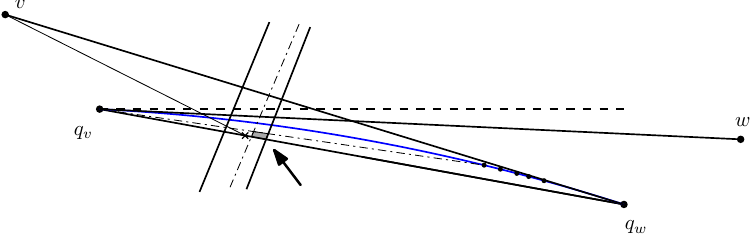}
\caption{Construction of tunnel crossings. The drawing shows the lower part of a tunnel between $v$ and $w$ and a part of another tunnel (indicated by the near-vertical strokes).}
\label{fig_app_tunnel_crossings}
\end{figure}

\subsection{Points for the Wiring}
Finally, we place the points at the wiring gadgets that allow us to draw the wiring edges, see \figurename~\ref{fig_app_wiring_points}.
The circles for the wiring gadgets are scaled versions of the unit circle.
For a vertex~$v$, let $\ell_{v}$ be the line through $v$ that is perpendicular to the supporting line of the origin $o$ and $v$.
Since the coordinates of $v$ are rational sines, the intersection points of $\ell_{v}$ with the circle~$C_v$ are rational as well.
Due to the choice of $\delta_\mathrm{n}'$, all points on $C_v$ that define tunnels are on the same side of $\ell_{v}$ as~$o$.
We are given two intervals, each between two rational sines, i.e., between the ``extremal'' tunnel endpoints on~$C_v$ and the intersection points of $\ell_{v}$ with $C_v$.
Therefore, we can choose a sufficient number of rational candidate points on $C_v$ to choose the points for the wiring from.

\begin{figure}
\centering
\includegraphics{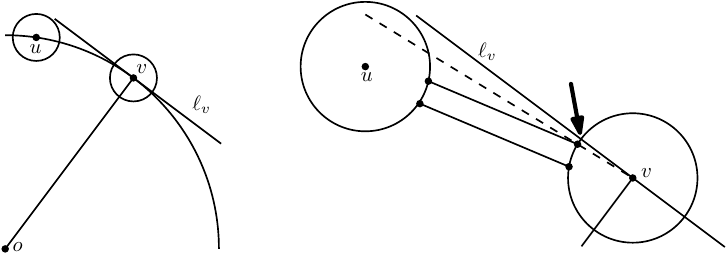}
\caption{Construction of the wiring points.
The small gap (indicated by the arrow) on the circle $C_v$ of $v$ between the intersection point with $\ell_{v}$ and the neighboring tunnel endpoint can be used for the candidate points.}
\label{fig_app_wiring_points}
\end{figure}

\subsection{Concluding Remarks on the Embedding}
The crucial part throughout the whole embedding procedure is that each (intermediate) point that is not a candidate point is constructed using only a constant number of other points.
The candidate points were constructed with polynomial parameters.
Hence, all denominators are polynomial in the input size.
In particular, note that even though some intervals were defined by points with algebraic coordinates, a lower bound on the interval can be given in terms of the other, rational coordinates that were used in the construction.
This allowed us to find rational points with polynomial denominators within these intervals.

\ifarxiv
\bibliographystyle{habbrv}
\else
\bibliographystyle{abbrv}
\fi
\bibliography{bibliography}

\end{document}